\pgfplotsset{compat=newest}
\pgfplotsset{
  every axis/.append style={
    tick label style={font=\footnotesize},
    label style={font=\footnotesize},
    legend style={font=\footnotesize}, 
    title style={font=\footnotesize}
  },
  every axis plot/.append style={
    mark options={scale=0.5} 
  }
}
\tikzset{semithick/.style={line width=1pt}}
\renewcommand{\bm}{\pmb}
\newcommand{\T}{{\top}}
\newcommand{\nr}{{n_\mathrm{r}}}
\newcommand{\nt}{{n_\mathrm{t}}}
\newcommand{\xv}{\bm{x}}
\newcommand{\yv}{\bm{y}}
\newcommand{\zv}{\bm{z}}
\newcommand{\uv}{\bm{u}}
\newcommand{\hv}{\bm{h}}
\newcommand{\Hm}{\bm{H}}
\newcommand{\Id}{\bm{I}}
\newcommand{\Vol}{\mathrm{Vol}}
\newcommand{\Ball}{{\mathcal{B}}}
\newcommand{\Sphere}{{\mathcal{S}}}
\newcommand{\Jm}{\bm{J}}
\newcommand{\Mm}{\bm{M}}
\newcommand{\E}{\mathbb{E}}
\newcommand{\ind}{\mathbf{1}}
\newcommand{\Q}{\mathsf{Q}}
\newcommand{\Xm}{\bm{X}}
\newcommand{\Ym}{\bm{Y}}
\newcommand{\thetav}{{\bm{\theta}}}
\newcommand{\Xc}{\mathcal{X}}
\newcommand{\Sc}{\mathcal{S}}
\newcommand{\Yc}{\mathcal{Y}}
\newcommand{\Bc}{\mathcal{B}}
\newcommand{\Vc}{\mathcal{V}}
\newcommand{\JF}{\mathsf{JF}}
\newcommand{\SM}{\mathsf{M}}
\newcommand{\GR}{\bm{\Gamma}}
\newcommand{\clip}{\mathsf{clip}}
\let\oldmid\mid
\renewcommand{\mid}{\!\oldmid\!}
\newcommand{\comment}[1]{ {\color{blue} #1} }
\newcommand{\revision}[1]{{\color{blue}#1}}
\renewcommand{\revision}[1]{#1}
\newcommand{\newrevision}[1]{{\color{blue}#1}}
\renewcommand{\newrevision}[1]{#1}
\newtheorem{proposition}{Proposition}
\newtheorem{lemma}{Lemma}
\newtheorem{theorem}{Theorem}
\newtheorem{corollary}{Corollary}
\newtheorem{remark}{Remark}
\begin{document}

\title{From Bayesian Asymptotics \\to General Large-Scale MIMO Capacity}
\author{Sheng~Yang,~\IEEEmembership{Member,~IEEE,} and Richard
Combes,~\IEEEmembership{Member,~IEEE}
\thanks{S. Yang and R. Combes are with the laboratory of signals and systems at
         CentraleSup\'elec-CNRS-Universit\'e Paris-Sud, 91192,
         Gif-sur-Yvette, France. Email:\{sheng.yang, richard.combes\}@centralesupelec.fr} 
\thanks{This work was supported in part by the Agence Nationale de la Recherche (ANR)
through the France 2030 project ANR-PEPR Networks of the
Future under grant agreement NF-YACARI 22-PEFT-0005.}
}

\maketitle

\renewcommand{\comment}[1]{#1}

\begin{abstract}
We present a unifying framework that bridges Bayesian asymptotics and
information theory to analyze the asymptotic Shannon capacity of general large-scale
MIMO channels including ones with nonlinearities or imperfect hardware. 
\comment{We derive both an analytic capacity formula and an asymptotically optimal input distribution in the large-antenna regime, each of which depends solely on the single-output channel's Fisher information through a term we call the \emph{(tilted) Jeffreys factor}.} We demonstrate how our method applies broadly to scenarios with clipping,
coarse quantization (including 1-bit ADCs), phase noise, fading with imperfect
CSI, and even optical Poisson channels. Our asymptotic
analysis motivates a practical approach to constellation design via a
compander-like transformation. \comment{Furthermore, we introduce a low-complexity receiver structure
that approximates the log-likelihood by quantizing the channel outputs into
finitely many bins, enabling near-capacity performance with computational complexity independent
of the output dimension.} Numerical results confirm that the proposed
method unifies and simplifies many previously intractable MIMO capacity
problems and reveals how the Fisher information alone governs the channel’s
asymptotic behavior.
\end{abstract}

\section{Introduction}
\label{sec:intro}
Ever-increasing demand for high data rates has made large-scale (or
{massive}) MIMO systems a leading candidate in modern wireless
architectures. By deploying a large number of antennas,
these systems can substantially improve spatial diversity and spectral
efficiency. However, analyzing the Shannon capacity of such large-scale systems is
complex, especially when hardware impairments (e.g., clipping or
saturation effects), low-precision analog-to-digital converters (ADCs),
or other nonlinearities are present. In addition, certain channel models (e.g.,
optical Poisson channels) are inherently nonlinear, further challenging
conventional approaches primarily suited to idealized linear MIMO
channels.

When the number of receive antennas grows, a powerful link emerges between
{Bayesian asymptotics} and {information theory}. This link is closely
tied to the concept of \emph{Bayesian redundancy}, wherein the
minimum expected redundancy --- defined as the Kullback-Leibler (KL) divergence
from the true distribution within a parametric family to the estimated one ---
coincides with the mutual information between the parameter and the observations.
The \emph{worst-case} or \emph{maximin} Bayesian redundancy thus equals the maximum
mutual information over all prior distributions of the parameter. In their
seminal paper \cite{CB94}, Clarke and Barron showed that, in the large-sample regime,
this maximum is achieved by \emph{Jeffreys prior}, which is proportional
to the square root of the Fisher information determinant. In a
communication setting --- where the channel input can be viewed as the parameter of
the output distribution --- Clarke--Barron’s result provides a natural
route to derive the asymptotic channel capacity, as demonstrated in
\cite{YC2024} for 1-bit MIMO channels in the large-antenna regime.

In this paper, we present a unified, step-by-step method that leverages Bayesian
asymptotics to analyze a broad class of large-scale MIMO channels. In essence,
when each receive antenna provides an independent observation drawn from a
distribution parameterized by the input, the channel’s capacity can be
characterized up to a vanishing term as the number of antennas grows large. We
show that the asymptotically optimal input distribution is {Jeffreys
prior}, determined solely by the Fisher information matrix of the single-letter
(per-antenna) distribution. Since this Fisher information is 
straightforward to compute \revision{analytically or numerically in most cases}, 
the framework readily applies to many scenarios of
interest. \comment{Our contributions are fourfold:}
\begin{itemize}
  \item First, we derive a systematic recipe for computing the asymptotic
  capacity of large-scale MIMO channels under both {peak-} and {average-power} constraints. Specifically, we extend Clarke and Barron’s result
  \cite{CB94} by incorporating an additional second-moment constraint typical
  in communication settings. A key outcome is a compact capacity expression in
  which a single \emph{Jeffreys factor} encapsulates both the power constraints
  and the channel distribution, all through the Fisher information of the
  per-antenna output.
  \item Second, we illustrate how this recipe addresses capacity problems that
  are intractable under conventional methods. Starting with single-input
  multiple-output (SIMO) channels that involve clipping, quantization, or phase
  noise, we then show how fading (including imperfect channel state
  information) and optical Poisson models also fit naturally into the proposed
  framework. Notably, we demonstrate how introducing dithering (i.e., artificially
  shifting signals prior to quantization) can boost the capacity of a multi-antenna
  channel with 1-bit ADCs.
  \item Third, inspired by the asymptotically optimal Jeffreys prior, we
  propose a systematic and universal compander-like transformation for practical constellation
  design. By applying a smooth nonlinear mapping to a uniform grid, one obtains
  a constellation that can significantly outperform conventional choices. We
  demonstrate this numerically in a multi-antenna system with low-resolution
  ADCs.
\item \comment{
Finally, we introduce a low-complexity receiver architecture that approximates the log-likelihood by quantizing the channel outputs into finitely many bins. We prove --- by analyzing the scaling of its capacity loss --- that this approach achieves near-capacity performance while maintaining computational complexity independent of the output dimension, thereby further reinforcing the practicality of our Bayesian-asymptotic framework.
}
\end{itemize}

Standard MIMO systems --- i.e., linear channels with infinite-precision receivers and
only average-power constraints --- have been well investigated since the pioneering
works \cite{Telatar99,Foschini}. For channels with nonlinearities such as
clipping or saturation~\cite{ochiai2002performance, sabbaghian2013analysis}, quantized output
\cite{Singh09,koch2013low,mezghani2020low}, or phase noise
\cite{Lapidoth02,Durisi14,GhozlanKramer15,YangShamai17}, and for channels
with only partial channel state information~(CSI) \cite{marzetta1999capacity,ZhengTse2002Grassman,Moser}, capacity results
are generally known only in special asymptotic-SNR regimes. In the optical
domain, Poisson channel models \cite{Shamai90,LapidothMoserPoisson09} yield
capacity bounds but no simple closed-form expressions. 

Motivated by large-scale or massive MIMO systems~\cite{Marzetta10}, our
work focuses on the regime where the number of receive antennas grows large. We
extend Clarke--Barron's Bayesian asymptotic result~\cite{CB94} to incorporate
both peak- and average-power constraints into a wide class of MIMO models,
yielding explicit, and surprisingly simple, capacity expressions valid for
large arrays. Consequently, even for \emph{standard} linear MIMO
channels, our approach provides a new capacity characterization that includes
both  peak- and average-power constraints in a tractable form. Beyond this,
our framework naturally accommodates nonlinear transceivers and channels with
imperfect CSI, as well as other nonlinear channel models. Just as the SNR serves
as the canonical surrogate for capacity in the linear regime, the
\emph{Jeffreys factor} now emerges as the key surrogate for capacity in these
more general large-scale MIMO settings. 

From a statistical perspective, the notion of a \emph{Jeffreys prior} and its
variants~\cite{kass1996selection} is well known 
as a noninformative prior that is invariant by any change of coordinates in the parameter space.
 In \cite{CB94}, Clarke and Barron formally proved that it
is asymptotically minimax and maximin in the large-sample regime. Remarkably,
the same distribution that is least favorable in a statistical sense turns out
to be the \emph{best} input distribution in a communication sense: it maximizes
the mutual information (i.e., capacity) as the number of receive antennas grows
large. Thus, the minimax principle for KL divergence --- originally
posed in parametric estimation --- coincides with the capacity-achieving design
principle in large-scale MIMO channels.

The rest of the paper is organized as follows. In Section~\ref{sec:problem},
we introduce the channel model and formulate the capacity characterization
problem. Then, we present our main result and give a concise recipe for
obtaining the asymptotic capacity in large-scale MIMO channels.
Sections~\ref{sec:applications} provide a wide range of detailed examples, each accompanied by numerical suggestions to illustrate how the optimal prior~(or Jeffreys factor)
changes with system parameters. Section~\ref{sec:constellation} discusses
practical constellation design inspired by the asymptotically optimal input
distribution. Section~\ref{sec:receiver} analyzes the capacity loss due to approximate log-likelihood computation at the receiver's side. \revision{Section~\ref{sec:noniid} shows how to extend our main results to the general case with correlated outputs.} Section~\ref{sec:conclusion} draws conclusions and suggests open
directions. Technical proofs are deferred to the Appendix to maintain clarity and flow.


\paragraph*{Notation}
We use $\log$ to denote base-2 logarithms and $\ln$ for natural
logarithms. Bold letters (e.g., $\bm{x}$, $\bm{y}$) denote vectors.  The
transpose operator is $(\cdot)^{\T}$, and $\|\cdot\|$ denotes the Euclidean
norm.  \revision{We let $\Ball_{n}$ be the unit $n$-ball, $\Sphere_{n-1}$ be the
unit $(n-1)$-sphere, and $\Vol(\cdot)$ denote volume, $\Ball_n(\bm{x},r)$ an $n$-ball of radius $r$ that is centered at $\bm{x}$.}
The functions \revision{$\phi(x):={1\over\sqrt{2\pi}} e^{-{x^2\over 2}}$ and $Q(x) = \int_x^\infty \phi(t)\, dt$, $x\in \mathbb{R}$,} denote
the pdf and the complementary cdf of the standard Gaussian distribution, respectively. 
\revision{The pdf of a circularly symmetric Gaussian is $\phi_c(x; \sigma^2):={1\over \pi \sigma^2} e^{-{|x|^2 \over \sigma^2}}$, $x\in \mathbb{C}$,} and $\Gamma(\cdot)$ is the Gamma function. We denote $[n] := \{1,\ldots,n\}$. We employ $o(\cdot)$ and $O(\cdot)$ for asymptotic notation \revision{in the usual sense}.  \revision{The set of real, complex, and natural numbers are denoted by $\mathbb{R}$, $\mathbb{C}$, and $\mathbb{N}$, respectively.}

\section{Problem Formulation and Proposed Method}
\label{sec:problem}

\subsection{Channel model}
Consider a memoryless stationary channel \revision{without feedback} with $\nt$ inputs and $\nr$
outputs at each time slot. In $n$ time slots, 
$n=1,2,\ldots$, the channel law factors as 
\begin{equation}
  p(\yv_1,\ldots,\yv_n \mid \xv_1,\ldots,\xv_n) =  
  \prod_{i=1}^n p(\yv_i \mid \xv_i), 
\end{equation}%
where $\xv_i\in\Xc\subset \mathbb{C}^{\nt}$ and $\yv_i\in\Yc^{\nr}$ are the input and output
at the $i$-th time slot, respectively; the conditional probability
density\footnote{\revision{The density function $p(\yv_i | \xv_i)$ is absolutely continous with respect to a dominating measure $\nu(y)$ over $\mathcal{Y}^{\nr}$.} It is a probability mass function~(pmf) when $\nu$ is the counting measure, and a pdf when $\nu$ is the Lebesgue measure.}
$p(\yv_i | \xv_i)$ is time-invariant. 
We assume the input alphabet $\Xc$ is bounded~(e.g., peak-power
constraint) and the input sequence is subject to the average-power\footnote{We assume that the norm is well defined in
$\Xc$.} constraint
\begin{equation}
  {1\over n} \sum_{i=1}^n \|\xv_i\|^2 \le P. 
\end{equation}%
The corresponding Shannon capacity is well defined and admits a single-letter form
\begin{equation}
  C(P) = \max_{p(\xv):\; {\xv\in\Xc\atop\E[\|\xv\|^2] \le P}} I(\xv; \yv), 
  \label{eq:capa}
\end{equation}%
and we denote by $C$ the unconstrained capacity (i.e., when $P$ is
sufficiently large). Moreover, we suppose that 
the receiver's observations at each channel use are~i.i.d.~from the distribution $p(y\mid \xv)$, i.e., 
\begin{equation}
  p(\yv \mid \xv) = \prod_{k=1}^\nr p(y_k \mid \xv),
  \label{eq:iid_output} 
\end{equation}%
where $p(y_k \mid \xv)$ is the same for each receive antenna $k$. This
assumption holds in many scenarios where each antenna experiences an
independent channel or local state (e.g., a random offset), known at the
receiver but not at the transmitter. \revision{A possible relaxation of the i.i.d.~assumption is discussed in Section~\ref{sec:noniid}.}

A particular example is the standard MIMO fading channel
\begin{equation}
  \yv = \Hm  \xv + \zv,
  \label{eq:MIMO}
\end{equation}
where $\zv$ has i.i.d.~noise samples $(z_1,\ldots,z_\nr)$ and
$\Hm$ has i.i.d.~fading vectors $(\hv_1,\ldots,\hv_\nr)$ as its rows.
We assume that $\xv$ is independent of $\Hm$ (i.e., there is no transmitter CSI).
When CSI is unavailable at the receiver, the capacity formula~\eqref{eq:capa}
applies directly; with perfect CSI at the receiver, one can treat $\Hm$ as part
of the output via $\yv \leftarrow (\yv,\Hm)$. Either way, the i.i.d.~structure
in \eqref{eq:iid_output} still holds. Additionally, one may handle block-fading
models by extending $\xv, \yv$ to appropriately sized matrices
$\Xm \in \mathbb{C}^{\nt \times T}$ and $\Ym \in \mathbb{C}^{\nr \times T}$
and letting $\Xc$ be a subset of $\mathbb{C}^{\nt\times T}$ and $\Yc$ be
$\mathbb{C}^{1\times T}$.

\subsection{Bayesian asymptotics and the Jeffreys factor}
\label{subsec:bayes}

Let the family of per-antenna output distributions
$\bigl\{\{p(y\mid \xv)\}_y\colon \xv \in \Xc \bigr\}$ admit a one-to-one, smooth
parameterization by a vector $\thetav \in \Theta$:
\begin{equation}
  \bigl\{\{p(y\mid \xv)\}_y\colon\, \xv \in \Xc \bigr\}
  =
  \bigl\{\{p_{\thetav}(y)\}_y\colon\, \thetav \in \Theta \bigr\}, \label{eq:parameterization}
\end{equation}
where $\Theta \subset \mathbb{R}^d$ is the parameter space with dimension $d := \dim(\Theta)$. 
Moreover, let us define the  \emph{cost function}
$c(\thetav):\;\Theta\to \mathbb{R}_+$ such that the average-power constraint $\E[\|\xv\|^2] \le P$ is equivalent to
\begin{equation}
  \E_{\thetav}[\,c(\thetav)\,]\le P.
\end{equation}
Define the Fisher information matrix of $p_{\thetav}(y)$ 
\begin{equation}
  \pmb{J}(\thetav)
  :=\E_{p_{\thetav}}
\Bigl[
\nabla_{\thetav} \ln p_{\thetav}({y})\,
\nabla_{\thetav} \ln p_{\thetav}({y})^{\T}
\Bigr].  
\label{eq:Fisher}
\end{equation}%

Our main result extends Clarke and Barron’s theorem~\cite{CB94} to account for
both peak- and average-power constraints.
\begin{theorem}
  \label{thm:main}
  Consider a large-scale MIMO setting where the number of receive antennas
  $\nr$ grows large. Under mild regularity conditions~(see Appendix~\ref{app:CB_cond}), the
  channel capacity subject to both peak- and average-power constraints satisfies
  \begin{align}
    {C(P) = 
    \frac{d}{2}\log\frac{\nr}{2\pi e}} + \log \JF(\lambda^*) + o(1),
\label{eq:capa1}
  \end{align}%
  where \revision{the $o(1)$ term vanishes when $\nr \to \infty$;} $\lambda^*\ge0$ is the smallest $\lambda$ such that the \emph{(tilted)
Jeffreys prior} 
\begin{equation}
  w_{J,\lambda}(\thetav) = \frac{ 2^{-\lambda c(\thetav)} \sqrt{\det\pmb{J}(\thetav)}}{\int_{\Theta}
  2^{-\lambda c(\bm{\vartheta})} \sqrt{\det\pmb{J}(\bm{\vartheta})}\,d\bm{\vartheta}},\quad
  \thetav\in \Theta
  \label{eq:optimal-prior}
\end{equation}
satisfies the constraint $\E_{\thetav\sim w_{J,\lambda}}[\,c(\thetav)\,]\le P$;
$w_{J,\lambda^*}$ is the pdf of the asymptotically optimal input distribution;
we define 
\begin{equation}
  \mathsf{JF}(\lambda^*) :=
  {\int_{\Theta} 2^{-\lambda^* (c(\thetav)-P)}
  \sqrt{\det \pmb{J}(\thetav)} \,d\thetav}  \label{eq:JF}
\end{equation}%
and call it the \emph{(tilted) Jeffreys factor}. Moreover, for any input distribution $w$ that is strictly positive and continuous over $\Theta$, we have
\begin{align}
  I(\thetav; \yv) &= C(P) - D(w \| w_{J,\lambda^*}) \nonumber \\
  &\phantom{=}\ +  \lambda^*\, \E_{\thetav\sim w}\bigl[c(\thetav) - P \bigr]+ o(1),  \label{eq:C(w)}
\end{align}%
when $\nr\to\infty$. 
\end{theorem}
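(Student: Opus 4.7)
The plan is to derive both (\ref{eq:capa1}) and (\ref{eq:C(w)}) by re-packaging Clarke and Barron's asymptotic expansion of mutual information through a Lagrangian. For any sufficiently regular prior $w$ on $\Theta$, \cite{CB94} yields
$$I(\thetav;\yv) = \frac{d}{2}\log\frac{\nr}{2\pi e} + h(w) + \tfrac{1}{2}\E_w\bigl[\log\det\Jm(\thetav)\bigr] + o(1),$$
where the $o(1)$ is understood to be uniform over a suitable class of priors. This already handles the peak-power constraint (which is baked into the boundedness of $\Xc$ and hence $\Theta$), but says nothing about the average-power constraint.

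To bring $\E_w[c(\thetav)]\le P$ into the picture, I would introduce a Lagrange multiplier $\lambda\ge 0$ and examine $I(\thetav;\yv)+\lambda\bigl(P-\E_w[c(\thetav)]\bigr)$. A direct calculation --- essentially completing the ``KL square'' by absorbing the tilt $2^{-\lambda c(\thetav)}$ into the normalizing prior --- rewrites this Lagrangian as
$$\frac{d}{2}\log\frac{\nr}{2\pi e}+\log\JF(\lambda)-D\bigl(w\,\|\,w_{J,\lambda}\bigr)+o(1).$$
Solving for $I(\thetav;\yv)$ delivers the identity (\ref{eq:C(w)}) for every $\lambda\ge0$; specializing to $\lambda=\lambda^*$ yields the stated form.

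For the converse, I would observe that whenever $w$ is admissible and $\lambda\ge 0$, both $D(w\|w_{J,\lambda})\ge 0$ and $\lambda(P-\E_w[c])\ge 0$, so the identity immediately gives
$$I(\thetav;\yv)\le\frac{d}{2}\log\frac{\nr}{2\pi e}+\log\JF(\lambda)+o(1).$$
Optimizing the right-hand side over $\lambda\ge0$ is a one-dimensional convex problem: $\log\JF$ is convex in $\lambda$ with derivative proportional to $P-\E_{w_{J,\lambda}}[c(\thetav)]$, so the minimum is attained at $\lambda=0$ if $\E_{w_J}[c]\le P$ and otherwise at the unique $\lambda^*>0$ solving $\E_{w_{J,\lambda^*}}[c]=P$ --- exactly the $\lambda^*$ in the statement. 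For achievability, I would substitute $w=w_{J,\lambda^*}$ back into the identity: the KL term vanishes, and by complementary slackness so does the $\lambda^*(P-\E_w[c])$ term, giving $I=\tfrac{d}{2}\log\tfrac{\nr}{2\pi e}+\log\JF(\lambda^*)+o(1)$. Matching the two bounds proves (\ref{eq:capa1}).

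The main technical obstacle will be ensuring that the $o(1)$ remainder in the Clarke-Barron expansion is uniform across all priors invoked: the tilted Jeffreys prior $w_{J,\lambda^*}$ on the achievability side and an arbitrary admissible $w$ on the converse side. This requires verifying CB94's regularity conditions (smooth dependence of $p_{\thetav}(y)$ on $\thetav$, positivity and smoothness of the prior on the interior of $\Theta$, integrability of Fisher information, tame behavior near the boundary of $\Xc$) for the tilted prior, plus a truncation/approximation argument to reduce a general strictly positive continuous $w$ to priors for which CB94 applies directly. The subsidiary analytical fact that $\lambda\mapsto\E_{w_{J,\lambda}}[c(\thetav)]$ is continuous and non-increasing (a consequence of the log-convexity of $\JF$) makes the definition of $\lambda^*$ unambiguous and justifies the KKT reasoning above.
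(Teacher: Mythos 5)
Your lower‐bound argument is essentially identical to the paper's: you apply the Clarke--Barron expansion to a fixed, positive, continuous prior, ``complete the KL square'' by introducing the tilted Jeffreys prior $w_{J,\lambda}$, and obtain exactly the identity~\eqref{eq:C(w)}; substituting $w=w_{J,\lambda^*}$ and invoking complementary slackness then gives achievability. That part is correct and matches the paper.

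The converse is where you diverge, and where there is a genuine gap. You prove the upper bound by applying the Clarke--Barron expansion to \emph{each admissible input prior $w$}, then using $D(w\|w_{J,\lambda})\ge 0$ and $\lambda(P-\E_w[c])\ge0$. But the Clarke--Barron expansion is only available for priors that are strictly positive and continuous over $\Theta$, and with an $o(1)$ term that, a priori, depends on the prior. The capacity in~\eqref{eq:capa} is a supremum over \emph{all} admissible input laws, including discrete ones, ones with atoms, or ones supported on lower-dimensional subsets --- none of which are covered by the expansion. You flag this yourself (``the main technical obstacle'') and propose a truncation/approximation fix, but that step is non-trivial: approximating an atomic prior by a positive continuous one does not automatically control the mutual information gap uniformly in $\nr$. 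The paper avoids this issue entirely by using a different converse route: by Lemma~\ref{lemma:mixture} (the redundancy--capacity identity), $I(\thetav;\yv)=\inf_R\E_{\thetav\sim w}\,D(p_\thetav^{\nr}\|R)$, so for \emph{any} $w$ and any $\lambda\ge 0$,
\begin{equation*}
  I(\thetav;\yv)-\lambda\,\E_w[\tilde c(\thetav)] \;\le\; \sup_{\thetav\in\Theta}\bigl(D(p_\thetav^{\nr}\,\|\,m_\nr^{J,\lambda})-\lambda\,\tilde c(\thetav)\bigr),
\end{equation*}
and the right-hand side no longer involves $w$. The tilted Jeffreys prior is chosen precisely so that $\log\bigl(\sqrt{\det\Jm(\thetav)}/w_{J,\lambda}(\thetav)\bigr)-\lambda\tilde c(\thetav)=\log\JF(\lambda)$ is constant in $\thetav$, so the \emph{uniform-in-$\thetav$} convergence of Lemma~\ref{lemma:uniform} (which only needs $w_{J,\lambda}$ to be positive and continuous, not the opponent's $w$) immediately gives the upper bound. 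This minimax-redundancy formulation is the right tool here; your Lagrangian calculus is correct in spirit but should be routed through the redundancy game rather than through a prior-by-prior CB expansion to close the converse.

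Two minor points: you should state explicitly that strong duality holds for the constrained capacity problem (Slater's condition is easily verified, as the paper notes), since the Lagrangian reformulation of $C(P)$ relies on it; and the monotonicity of $\lambda\mapsto\E_{w_{J,\lambda}}[c(\thetav)]$ that you use to pin down $\lambda^*$ is exactly Lemma~\ref{lemma:M_decrease} in the paper, so you may cite it rather than appealing to log-convexity of $\JF$.
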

\begin{proof}
  See Appendix~\ref{app:th1}. 
\end{proof}
When the average-power constraint is not active (i.e., $P$ is large enough), 
we have $\lambda^*=0$ so that we recover Clarke and Barron's original result in~\cite{CB94}.

\subsection{Recipe for asymptotic capacity in large-scale MIMO}
\label{subsec:recipe}

From \eqref{eq:capa1}, we see that the asymptotic capacity only depends
on the dimension of the parameter space and the Fisher information
matrix of the per-antenna output distribution. Indeed, the first term $\frac{d}{2}\log\frac{\nr}{2\pi e}$ 
in \eqref{eq:capa1} is dominant and dictated by the number of \emph{degrees of
freedom} in the channel, while the second term shows the impact of the
actual channel distribution and the input constraints. Remarkably, the
Fisher information plays a central role in the capacity of large-scale
MIMO systems. The communication problem in the large receive antenna array regime essentially becomes a parameter estimation problem.

From \eqref{eq:capa1}, the asymptotic behavior of the capacity depends only on (i)~the dimension
$d$ of the parameter space and (ii)~the Fisher information matrix of the
single-letter output. Indeed, the dominant term
$\tfrac{d}{2}\log(\tfrac{\nr}{2\pi e})$ reflects the effective \emph{degrees of
freedom}, while the Jeffreys factor translates the channel distribution and
power constraints into a concise integral involving $\sqrt{\det\Jm(\thetav)}$.
In effect, communication in the large-$\nr$ regime becomes a parameter estimation
problem governed by the single-antenna Fisher information.

Hence, to compute the asymptotic capacity of a large-scale MIMO channel, one
proceeds as follows:
\begin{enumerate}
  \item Identify a one-to-one, smooth\footnote{\comment{When we say $\thetav\mapsto \{p_{\thetav}(y)\}$ is one-to-one and smooth, we mean that different values of $\theta$ produce different distributions, and for almost every $y$, the value of the density is smooth with respect to $\theta$, as specified by the regularity conditions in Appendix.}} mapping $\thetav\mapsto \{p_{\thetav}(y)\}_y$, such that \eqref{eq:parameterization} holds. 
\item Derive the Fisher information $\Jm(\thetav)$ from \eqref{eq:Fisher}.
\item Obtain the Jeffreys factor by evaluating the integral in
  \eqref{eq:JF}, after identifying the optimal tilting $\lambda^*$ that enforces
  $\E_{\thetav\sim w_{J,\lambda^*}}[\,c(\thetav)\,]\le P$.
\end{enumerate}

\comment{The optimal tilting $\lambda^*$ can be computed efficiently. Define 
\begin{align}
  \SM(\lambda) &:= \E_{\thetav\sim w_{J,\lambda}}[\,c(\thetav)\,]\\ 
  &= \frac{ \int_{\Theta} c(\bm{\vartheta})  2^{-\lambda c(\bm{\vartheta})} \sqrt{\det\pmb{J}(\bm{\vartheta})} d\bm{\vartheta}}{\int_{\Theta} 2^{-\lambda c(\bm{\vartheta})} \sqrt{\det\pmb{J}(\bm{\vartheta})}\,d\bm{\vartheta}}. 
\end{align}%
If $\SM(0) \le P$, then $\lambda^* = 0$ by definition. If $\SM(0) > P$, the following result guarantees that one can find $\lambda^*$ rapidly using \revision{bisection}.

  \begin{lemma}
    \label{lemma:M_decrease}
    If the cost function $\thetav \mapsto c(\thetav)$ is not constant, the mapping $\lambda \mapsto \SM(\lambda)$ is strictly decreasing. 
  \end{lemma}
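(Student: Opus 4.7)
The plan is to recognize $\SM(\lambda)$ as (a scalar multiple of) the derivative of the log-partition function of an exponentially tilted family, and then to identify $\SM'(\lambda)$ with the variance of $c(\thetav)$ under $w_{J,\lambda}$. Non-constancy of $c$ will force this variance to be strictly positive, giving the strict monotonicity.

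First, I introduce the partition function
\begin{equation*}
  Z(\lambda) \;:=\; \int_{\Theta} 2^{-\lambda c(\bm{\vartheta})} \sqrt{\det \pmb{J}(\bm{\vartheta})}\, d\bm{\vartheta},
\end{equation*}
so that $w_{J,\lambda}(\thetav) = 2^{-\lambda c(\thetav)}\sqrt{\det \Jm(\thetav)}/Z(\lambda)$ and
\begin{equation*}
  \SM(\lambda) \;=\; \E_{\thetav\sim w_{J,\lambda}}[c(\thetav)] \;=\; -\frac{1}{\ln 2}\,\frac{d}{d\lambda}\ln Z(\lambda),
\end{equation*}
provided differentiation under the integral is justified. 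This justification is the first step: since $c(\thetav)\ge 0$ and since for every $\lambda_0$ in the relevant domain we can find a neighborhood on which $c(\thetav)\,2^{-\lambda c(\thetav)}\sqrt{\det\Jm(\thetav)}$ (and the corresponding squared version used below) is uniformly dominated by an integrable function, dominated convergence gives the required interchange. In the regime where $\SM(0)>P$, the integrals in question are finite for all $\lambda\ge 0$, and these dominations are routine.

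Second, I differentiate once more. A direct computation yields
\begin{equation*}
  \SM'(\lambda) \;=\; -\frac{1}{\ln 2}\,\frac{d^{2}}{d\lambda^{2}}\ln Z(\lambda) \;=\; -\ln 2 \cdot \mathrm{Var}_{\thetav\sim w_{J,\lambda}}\bigl[c(\thetav)\bigr],
\end{equation*}
the last equality being the standard fact that the second derivative of the cumulant generating function of the natural statistic $c(\thetav)$ equals its variance under the tilted measure. Hence $\SM'(\lambda)\le 0$ automatically, and strict negativity reduces to showing that the variance is strictly positive.

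Third, I argue that $\mathrm{Var}_{w_{J,\lambda}}[c(\thetav)]>0$ whenever $c$ is non-constant on $\Theta$. Since $w_{J,\lambda}$ is proportional to the strictly positive density $2^{-\lambda c(\thetav)}\sqrt{\det\Jm(\thetav)}$ on the set where $\sqrt{\det\Jm(\thetav)}>0$ (a set that, by the regularity conditions of Appendix~\ref{app:CB_cond}, carries all of the Jeffreys mass), $w_{J,\lambda}$ is equivalent as a measure to the untilted Jeffreys prior. Therefore if there exist two points $\thetav_1,\thetav_2\in\Theta$ with $c(\thetav_1)\neq c(\thetav_2)$, both neighborhoods around them receive positive $w_{J,\lambda}$-mass, and by continuity of $c$ the random variable $c(\thetav)$ is not $w_{J,\lambda}$-almost-surely constant; hence its variance is strictly positive. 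Combined with step two, this yields $\SM'(\lambda)<0$ for all $\lambda\ge 0$, establishing the strict monotonicity.

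The only genuinely delicate part is the dominated-convergence argument that legitimizes differentiating twice under the integral, since it depends on the tail behavior of $c$ and of $\sqrt{\det\Jm}$ on $\Theta$; once that is granted, the rest of the argument is the classical exponential-family identity.
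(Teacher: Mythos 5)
Your proof is correct and follows essentially the same route as the paper: both compute $\SM'(\lambda) = -(\ln 2)\,\mathrm{Var}_{\thetav\sim w_{J,\lambda}}[c(\thetav)]$ and conclude strict negativity from non-constancy of $c$. Your repackaging via the log-partition function $Z(\lambda)$ (cumulant-generating-function identity) is just a cleaner organization of the paper's direct quotient-rule differentiation, and the extra care you give to dominated convergence and to why the variance is strictly positive is a welcome tightening of a step the paper states tersely.
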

  \begin{proof}
    Differentiating, we have
    \begin{align}
      \SM'(\lambda) &= -(\ln 2 ) \frac{ \int_{\Theta} c(\bm{\vartheta})^2  2^{-\lambda c(\bm{\vartheta})} \sqrt{\det\pmb{J}(\bm{\vartheta})}\, d\bm{\vartheta}}{\int_{\Theta} 2^{-\lambda c(\bm{\vartheta})} \sqrt{\det\pmb{J}(\bm{\vartheta})}\,d\bm{\vartheta}}
  \nonumber\\
  &\phantom{=}\ + (\ln 2) \left( \frac{ \int_{\Theta} c(\bm{\vartheta})  2^{-\lambda c(\bm{\vartheta})} \sqrt{\det\pmb{J}(\bm{\vartheta})} \,d\bm{\vartheta}}{\int_{\Theta} 2^{-\lambda c(\bm{\vartheta})} \sqrt{\det\pmb{J}(\bm{\vartheta})}\,d\bm{\vartheta}}
      \right)^2	\\	     
		    &= - \mathsf{Var}_{\theta \sim w_{J,\lambda}}(c(\theta))\ln 2 < 0, 
                  \end{align}%
                  where the variance \revision{is nonzero unless $\thetav\mapsto c(\thetav)$ is constant over the support $\Theta$ of $w_{J,\lambda}$}.
  \end{proof}
 
}

\subsection{A toy application: Real AWGN}
\label{subsec:toyAWGN}
To demonstrate the three-step recipe, let us consider the channel with additive white Gaussian noise~(AWGN):
\begin{equation}
y_i = x + z_i, \quad i\in[\nr], 
\end{equation}
\revision{where $z_i \sim \mathcal{N}(0,1)$, $i\in[\nr]$, are i.i.d.~Gaussian noise.} The input $x$
lies in $\Xc:=[-A,A]$, thus enforcing a peak-power constraint $x^2 \le A^2$, and
an average-power constraint
$  \E\bigl[x^2\bigr]\le P. $
The per-antenna conditional output distribution is Gaussian with mean
$x$ and variance~$1$, i.e., 
\begin{equation}
  p(y\mid x) = \phi(y-x), \quad x\in\Xc,\, y\in \mathbb{R}.  
\end{equation}%
It is straightforward to see that 
$x\mapsto \{\phi(y-x)\}_y$ is one-to-one and smooth. So we let 
$\theta = x$ and $\Theta = \Xc$, with the corresponding Fisher information
\begin{align}
  J(\theta) &= \E_y \left[ \Bigl({\partial \over \partial \theta} \ln \phi(y-\theta)\Bigr)^2 \right] \\
   &= \E_y \Bigl[ (y-\theta)^2 \Bigr] \\
   &= 1, 
\end{align}%
and cost function
$  c(\theta) := x^2 = \theta^2, $
for $\theta \in \Theta$.  
Using \eqref{eq:JF}, the Jeffreys factor is
\begin{align}
 \JF(\lambda) &=  \int_{-A}^{A} 2^{-\lambda (\theta^2-P)}
 \sqrt{J(\theta)} \,d \theta \label{eq:JF_SIMO}\\ 
 &= 2^{\lambda P}\sqrt{\frac{\pi}{\lambda\,\ln 2}}
\Bigl[ 1 -2Q\bigl(\sqrt{2\lambda \ln 2}A\bigr)
\Bigr],  
\end{align}%
and corresponding Jeffreys prior is 
\begin{equation}
  w_{J,\lambda}(\theta) = {1\over \JF(\lambda)} 2^{-\lambda (\theta^2-P)}
  \,\ind\left( \theta\in [-A, A] \right), \label{eq:JP0}
\end{equation}%
which is simply a truncated Gaussian. We can compute
\begin{equation}
  \SM(\lambda) = \E [c(\theta)] = {1\over \lambda \ln2} \left( {1\over2} - {A \,
  2^{\lambda(P-A^2)} \over \JF(\lambda)} \right), 
\end{equation}%
and solve for the smallest $\lambda=\lambda^*$ satisfying
$\SM(\lambda^*)\le P$.
The resulting input distribution $w_{J,\lambda^*}(\theta)$
is thus capacity-achieving in the large-$\nr$ regime.

\noindent\emph{Peak- vs.\ average-power constraints.}
When $\lambda=0$, the Jeffreys prior $w_{J,0}$ becomes uniform on $[-A,A]$, with second moment
$A^2/3$. Hence if $P \ge A^2/3$, the average-power constraint is inactive, and
$\lambda^*=0$. From \eqref{eq:capa1}, the corresponding capacity simplifies to
\begin{equation}
C =
\frac{1}{2}\,\log\frac{2\nr A^2}{\pi e} + o(1), 
\label{eq:capa-1D-AWGN}
\end{equation}
achieved by a uniform input. 
In general, for $P < A^2/3$, both constraints become active, leading to the
asymptotic capacity expression
\begin{equation}
C(P) =
\frac{1}{2}\,\log\frac{\nr \, \JF(\lambda^*)^2}{2\pi e} + o(1), 
\label{eq:SIMO}
\end{equation}
where $\JF(\lambda^*) \le 2A$ serves as an effective amplitude measure. In the
small-$P$ regime, the peak constraint is almost never saturated; hence $x$ behaves
like a Gaussian with variance approximately $1/(2\lambda^*\ln 2)$,
giving $P \approx 1/(2\lambda^*\ln 2)$ and
\begin{equation}
  \JF(\lambda^*) \approx 2^{\lambda^*\! P} \sqrt{2 \pi P} \approx
  \sqrt{2\pi e P}.
\end{equation}%
Thus, $C(P)$ reduces to $\tfrac12\log(\nr P) + o(1)$, aligning with the exact
capacity without peak power constraint $\tfrac12\log(1+\nr P)\approx \tfrac12\log(\nr P)$ for large~$\nr$.
Figure~\ref{fig:toy-example} illustrates the Jeffreys factor $\JF(\lambda^*)$ vs.\ $\sqrt{P}$, comparing it to the approximation for small $P$.

\begin{figure}[t]
\centering
\includegraphics[width=\columnwidth]{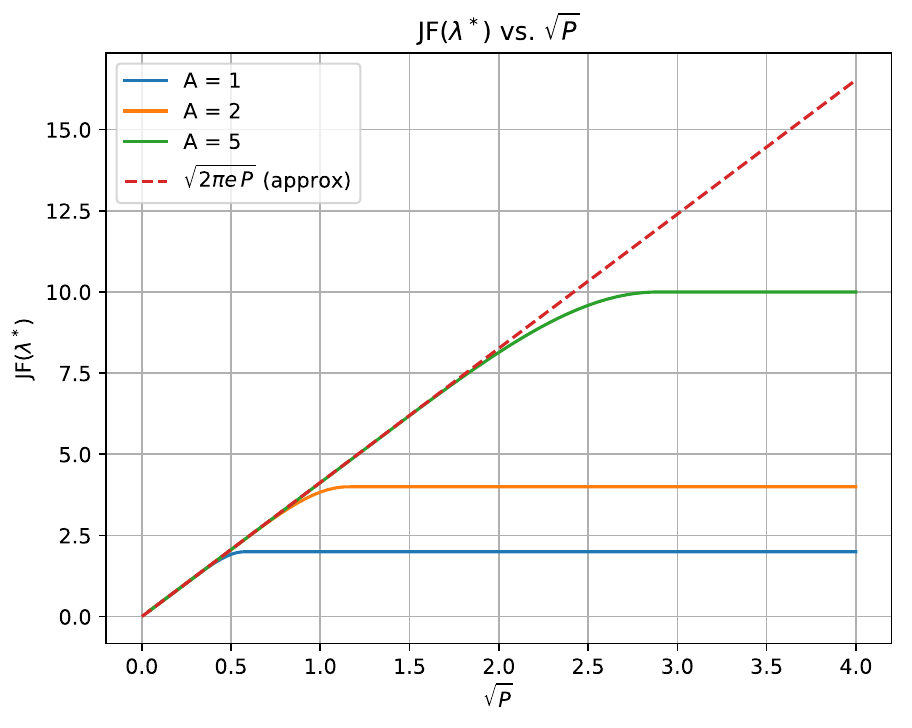}
\caption{Illustration of the Jeffreys factor in the SIMO real-AWGN case vs.\ $\lambda$.}
\label{fig:toy-example}
\end{figure}

\section{Applications to Different Channels}
\label{sec:applications}

We have illustrated our method on a toy example where the capacity is well understood. In this section, we show how the proposed framework extends to a range of channels for which the exact capacity is difficult to characterize, especially in the presence of both peak- and average-power constraints.

\subsection{AWGN with clipping}
\label{subsec:awgnclip}
Consider the same real AWGN channel as in the toy example, but with output clipping, i.e.,
$y_i = \clip(x+z_i)$  with $\clip(u):=\max(-B, \min(u, B))$. 
This is a simplified model of output saturation at the receiver. 
Hence, the conditional distribution $p(y \mid x)$ is a truncated Gaussian plus point masses at $\pm B$, i.e., 
\begin{align}
 p(y \mid x) &= {\phi(y-x)}  \mathbf{1}(|y|<B)   
 \nonumber \\ &\phantom{=} \ + {Q(B-x)}\delta(y-B)  +
 {Q(B+x)}\delta(y+B),  
\end{align}%
where $\delta(\cdot)$ is the Dirac function\footnote{\comment{Formally, the density is defined (and absolutely continuous) with respect to the measure $dy + \delta_{+B}(dy) + \delta_{-B}(dy)$, i.e., the sum of the Lebesgue measure on $\mathbb{R}$ and point masses at $+B$ and $-B$, so that for any measurable function $f$, the conditional expectation can be written as $\mathbb{E}\bigl[f(y) \mid x\bigr] = f(B) Q(B-x) +  f(-B) Q(B+x) + \int_{-\infty}^{+\infty} f(y) \phi(y-x) \ind(|y| \le B) dy
$.} }. Again, 
$x \mapsto \{ p(y\mid x) \}_y$ is one-to-one and smooth on $[-A, A]$. Thus, we let $\theta = x$ and
$\Theta = \Xc$ as in the toy example. 
We can now explicitly compute the Fisher information:
\begin{equation}
J(\theta)
 := 1 - \sum_{s\in\{B+\theta,\,B-\theta\}} 
 \biggl(Q(s)+s\,\phi(s) - \frac{ \phi^2(s) }{Q(s)} \biggr),
\label{eq:clipping-FI}
\end{equation}
which is strictly less than 1 for most $\theta$ due to the clipping, 
manifested by the negative term\footnote{\comment{Indeed, since its derivative is 
$-{\phi(s)\over Q^2(s)}\left( s Q(s) - \phi(s) \right)^2 \le 0$, the function inside the sum is decreasing with $s$ and decays to $0$ when $s\to\infty$.}}. 

Plugging $J(\theta)$ into \eqref{eq:JF_SIMO} yields the Jeffreys factor and Jeffreys prior \eqref{eq:JP0}. Notably, for small $B$, the prior peaks near $\theta=0$ because strong clipping reduces the effective
input range. Figure~\ref{fig:JP-clipping} shows this numerically for different
clipping thresholds $B=\alpha A$ when $P=A^2/9$. In the right plot, we see how the clipping ratio affects proportionally the ratio $\JF(\lambda^*)/A$. 

\begin{figure}[t]
\centering
\includegraphics[width=\columnwidth]{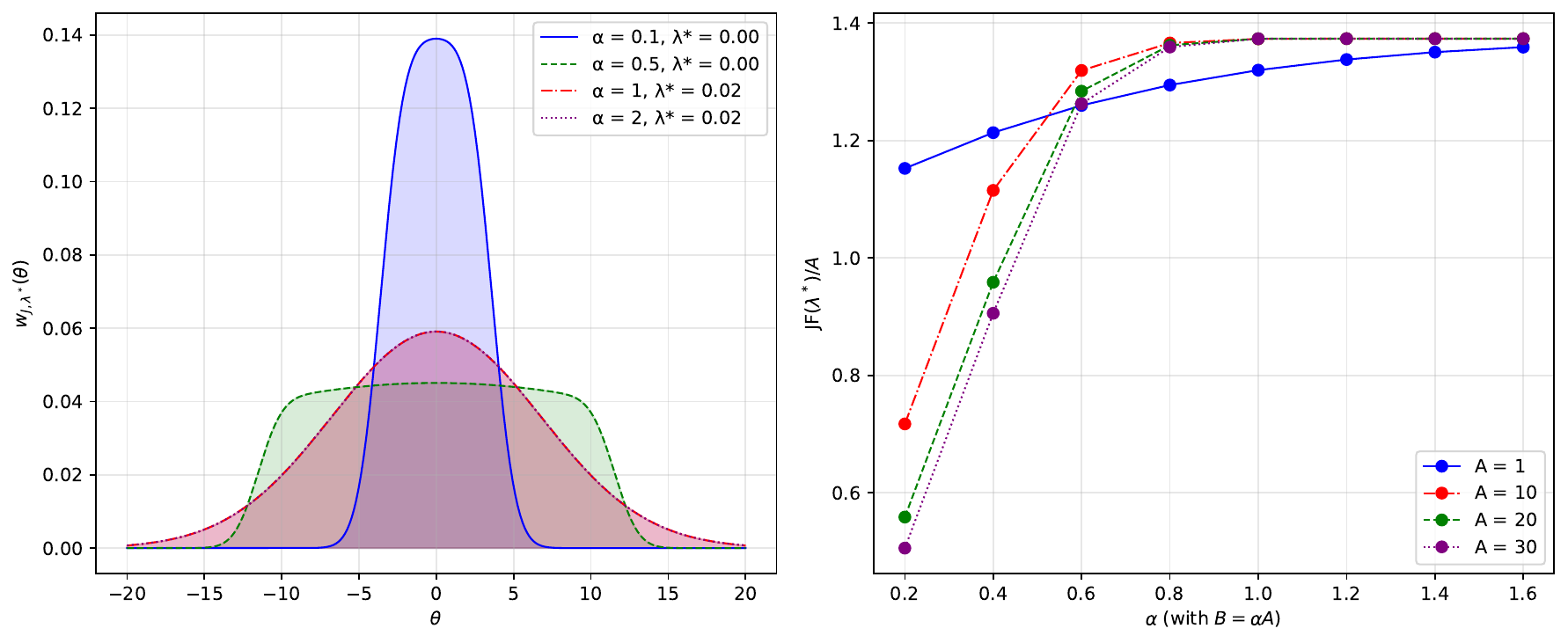}
\caption{Jeffreys prior for a clipped SIMO AWGN channel with different
clipping thresholds $B=\alpha A$. Average power $P=A^2/9$. Left: Jeffreys prior with different clipping ratios. Right: Jeffreys factor normalized by the peak amplitude $A$ vs.~clipping ratio~$\alpha$. }
\label{fig:JP-clipping}
\end{figure}

\subsection{AWGN with low-resolution ADC}
\label{subsec:awgnquant}

Continuing in the real SIMO setting, suppose each output is quantized into $L$
levels via thresholds 
\[
  -\infty = t_0 < t_1 < \dots < t_{L-1} < t_L = \infty.
\]
Hence, the $i$-th antenna output is
\begin{equation}
{y_i} = \sum_{\ell=1}^{L} \ell\;\ind({x}+z_i\in(t_{\ell-1},t_\ell]),
\quad i\in[\nr]. 
\end{equation}%
The per-output conditional distribution is discrete in this case
\begin{equation}
  p(y_i \mid {x}) = Q({x}-t_{y_i}) - Q({x}-t_{y_i-1}),\quad y_i\in[L]. \label{eq:tmp1211}
\end{equation}%
Again, we can verify that
$x\mapsto \{p(y\mid x)\}_y$ is one-to-one and smooth, so we set $\theta = x$ and
$\Theta = \Xc$. 
Computing the Fisher information is straightforward:
\begin{align}
J(\theta) &= \sum_{\ell=1}^L \frac{\bigl[\phi({x}-t_{\ell-1}) - \phi({x}-t_{\ell})\bigr]^2} {Q({x}-t_{\ell}) - Q({x}-t_{\ell-1})}.
\label{eq:lowres-FI}
\end{align}
When $L=2$ with $t_1=0$, we recover the 1-bit ADC~(sign) case of \cite{YC2024}.

Figure~\ref{fig:JP-ADC} illustrates the Jeffreys prior and factor for
various $L$ under a peak constraint $|x|\le A$ and average power $P=A^2/9$.
The quantizers are uniform with $L$ levels and clipping at $\pm A$ for $L > 2$. For instance, when $L=4$, we have
$t_1 = -A$, $t_2 = 0$, $t_3 = A$.
For a small peak amplitude $A$, the Jeffreys factor saturates quickly with $L$, suggesting that moving beyond coarse quantization yields diminishing gains when $A$ is not large. 

\revision{In a complex-valued model where both the real and imaginary components are quantized with $L$ levels, each output can be represented as a pair in $[L]^2$. The conditional output distribution in~\eqref{eq:tmp1211} is then replaced by the product of the conditional distributions of the two components. The corresponding Fisher information can be derived in an analogous, albeit more involved, manner.}

\begin{figure}[t]
\centering
\includegraphics[width=\columnwidth]{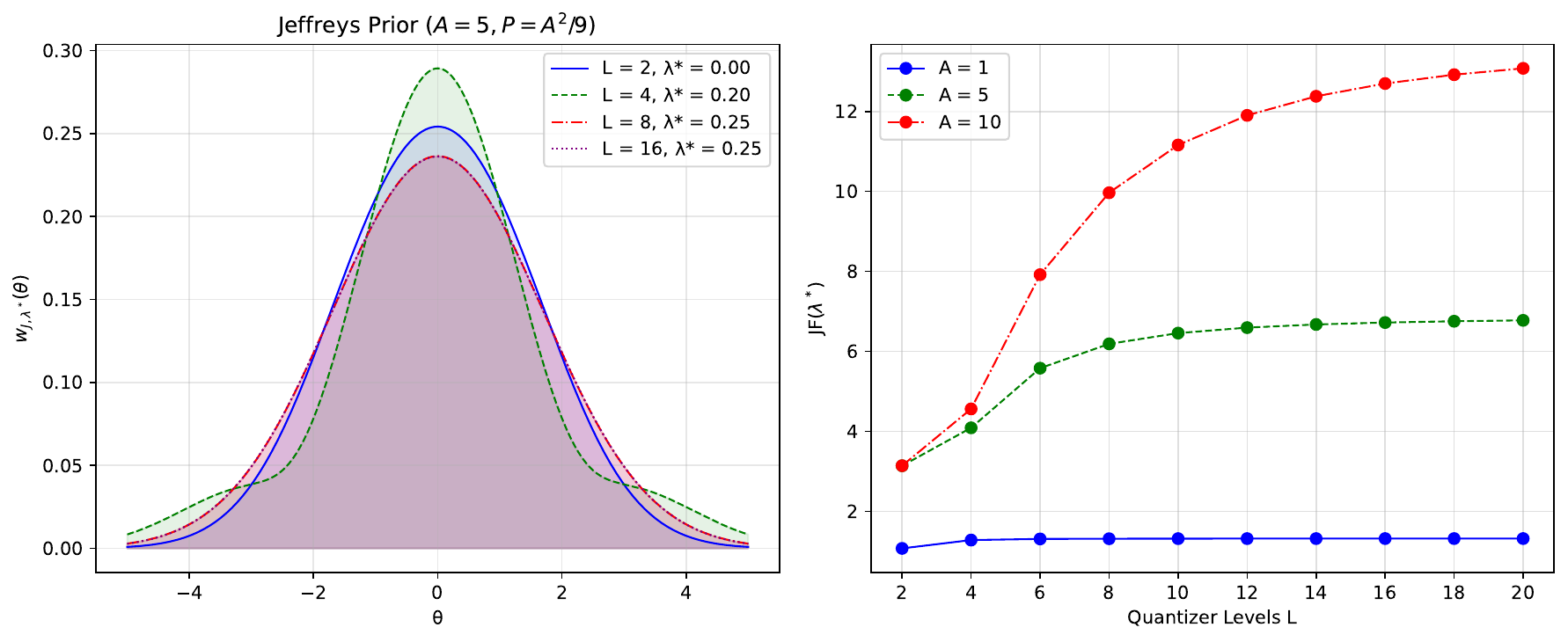}
\caption{Jeffreys prior (left) and Jeffreys factor (right) for real SIMO
AWGN with $L$-level quantizers and peak amplitude $A$. Average power constraint $P = A^2/9$. }
\label{fig:JP-ADC}
\end{figure}

\subsection{Complex AWGN with \revision{energy detection}}
\label{subsec:phaseNoise}

\revision{Next, consider the complex AWGN SIMO channel in which the receiver has no knowledge
of the channel phase, for instance due to uniform phase noise.
In this scenario, only the \emph{magnitude} of each received signal,
$|y_i|$, $i\in[\nr]$, is observable.}
Equivalently, we consider the following model
\begin{equation}
|y_i| =  |x + z_i|, \quad i \in [\nr],
\end{equation}%
where $z_i\sim \mathcal{CN}(0, 1)$, and  
$$ x\in \Xc:= \bigl\{x \in \mathbb{C}:\, |x|\le A \bigr\}.$$  
Then, conditioned on $x$, the random variable $\tilde{y}_i :=
2|y_i|^2$ follows a noncentral
$\chi^2$ distribution with $2$ degrees of freedom and noncentrality parameter
$2|x|^2$. In particular, 
\begin{equation}
p(\tilde{y} \mid x)
=\tfrac12\exp\bigl(-\tfrac{\tilde{y}+2|x|^2}{2}\bigr)
I_0\bigl(\sqrt{2|x|^2\,\tilde{y}}\bigr),\quad \tilde{y}\ge0,
\end{equation}%
where $I_k(\cdot)$, $k=0,1,\ldots$, is the order-$k$ modified Bessel
function of the first kind. Since the distribution $p(\tilde{y}
\mid x)$ depends on $x$ only through $|x|$, we let $\theta = |x|$ and $\Theta = [0, A]$, and we can check that the
parameterization $\theta \mapsto \{p(y \mid |x|=\theta)\}_y$ is 
one-to-one and smooth. The cost function is $c(\theta) = \theta^2 =
|x|^2$. 

It follows that the Fisher information has the following expression
\begin{equation}
J(\theta)= \mathbb{E}_{\tilde{y}|\theta}\!\left[\left(-{2\theta} + \sqrt{{2\tilde{y}}}\, \frac{I_1\Bigl(\theta\sqrt{{2\tilde{y}}}\Bigr)} {I_0\Bigl(\theta\sqrt{{2\tilde{y}}}\Bigr)}\right)^2\right]
\end{equation}%
that can be computed numerically. 

Since the problem depends only on the input magnitude, it suffices to restrict the input be real and positive, i.e., $x\ge0$. 
The resulting Jeffreys prior and factor, under average power $P=A^2/9$, is shown in
Figure~\ref{fig:JP-PN}. We observe that the complex channel with
\revision{energy detection} has a similar behavior as the real SIMO channel, 
for the phase information is lost.  

\begin{figure}[t]
\centering
\includegraphics[width=\columnwidth]{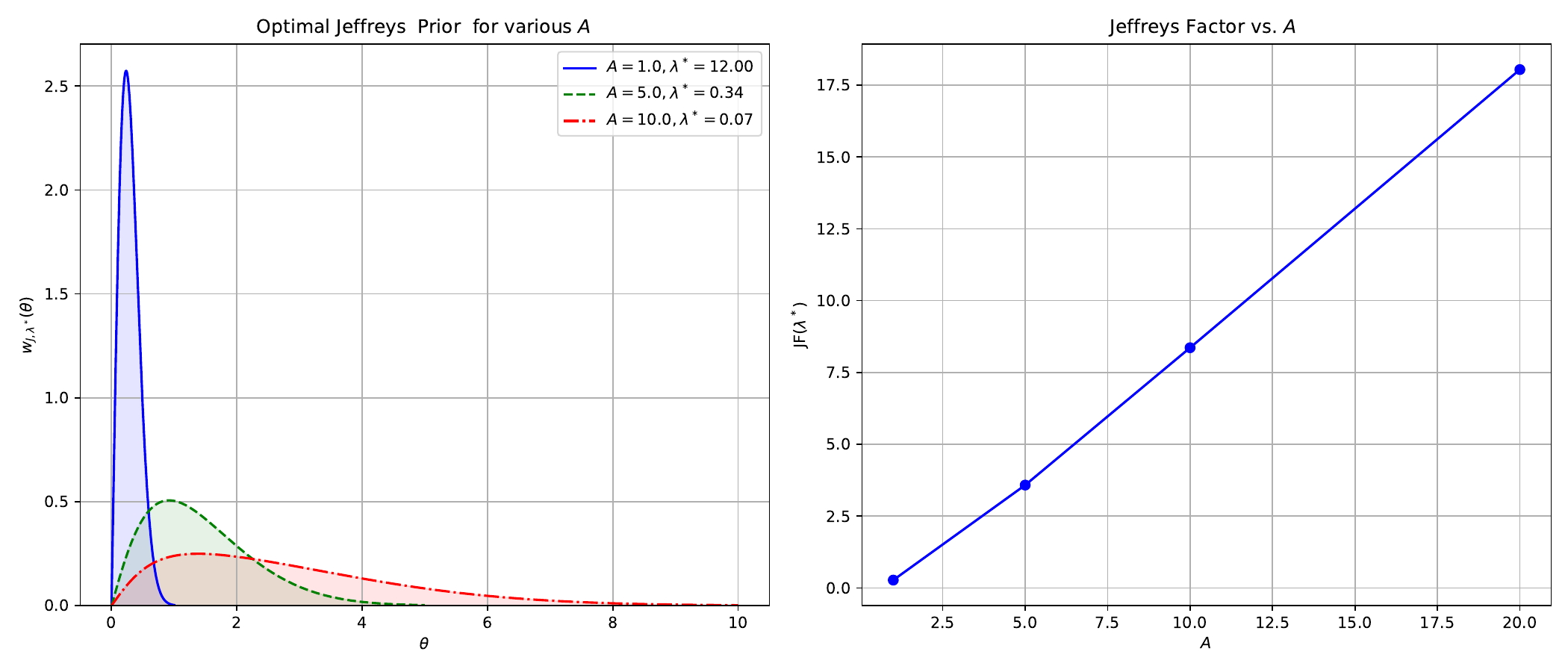}
\caption{Jeffreys prior~(left) and Jeffreys factor~(right) for a complex SIMO channel with \revision{energy detection}.}
\label{fig:JP-PN}
\end{figure}

\subsection{MIMO channels with imperfect CSI}
\label{sec:app_MIMO}

In the following, we demonstrate how the proposed framework applies in scenarios with
 fading or other forms of \emph{channel states}. In particular, we
 focus on settings in which the channel states are i.i.d.\ across antennas,
 known to the receiver but not to the transmitter.

Consider the conventional MIMO fading model~\eqref{eq:MIMO}, but assume the
receiver obtains only a partial estimate $\hat{\Hm}$ of the channel matrix
$\Hm$. Specifically, let
\begin{equation}
 \Hm = \hat{\Hm} + \tilde{\Hm},
\end{equation}%
where $\tilde{\Hm}$ is estimation noise independent of $\hat{\Hm}$. The input
$\xv$ is subject to both a peak-power constraint
\begin{equation}
  \xv \in \Xc := \{\xv \in \mathbb{C}^{\nt}:\, \|\xv\| \le A \},
\end{equation}%
and an average power constraint $\E [\|\xv\|^2] \le P$. 

The rows of $\hat{\Hm}$ are i.i.d.~$\sim p(\hat{\hv})$, while the entries of
$\tilde{\Hm}$ are i.i.d.~$\sim\mathcal{CN}(0,\sigma^2)$. This model recovers the
{noncoherent} channel when $\hat{\Hm}=0$ and the {coherent} channel
when $\sigma=0$. The joint pdf of the outputs given $\xv$ is
\begin{equation}
  p(\yv, \hat{\Hm} \mid \xv) = \prod_{i=1}^{\nr} p(y_i, \hat{\hv}_i \mid \xv),
\end{equation}%
where the per-output distribution is 
\begin{equation}
  p(y, \hat{\hv} \mid \xv) = p(\hat{\hv}) \, \revision{\phi_c\!\left({y - 
  \hat{\hv}^{\T} \xv; {1+\sigma^2 \|\xv\|^2}} \right).} \label{eq:tmp2111}
\end{equation}%

\revision{Under some regularity conditions on $p(\hat{\hv})$~(e.g., not deterministic or degenerate), 
the mapping $\xv\mapsto \{p(y, \hat{\hv} \mid \xv)\}_{y,\hat{\hv}}$ is a
one-to-one smooth parameterization. In this case, let us set 
$\thetav = [\Re(\xv); \Im(\xv)]$ and 
$\Theta =\{\thetav\in\mathbb{R}^{2\nt\times 1}:\,\|\thetav\|\le A\}.$}
The cost function becomes $c(\thetav) = \|\thetav\|^2$. 
The Fisher information matrix w.r.t.~$\thetav$ is
\begin{IEEEeqnarray}{rCl}
  \Jm(\thetav) = \frac{2}{1 + \sigma^2
  \|\thetav\|^2} \, \GR + \frac{4
  \sigma^4}{\left( 1 + \sigma^2 \|\thetav\|^2 \right )^2} \,
  \thetav \thetav^{\T},\nonumber
\end{IEEEeqnarray}%
where $\GR$ is the real lifting\footnote{We say that $\left[
\begin{smallmatrix} \Re(\Mm)&
  -\Im(\Mm)\\ \Im(\Mm)& \Re(\Mm) \end{smallmatrix} \right]$ is the real
  lifting of $\Mm$. } of $\E\bigl[\hat{\hv} \hat{\hv}^\T \bigr]$. 
If $\GR$ is invertible, one may
further compute
\begin{align}
  \det\left(\Jm(\thetav)\right) &=
  \left(\frac{2}{1 + \sigma^2
  \|\thetav\|^2} \right)^{2\nt}  \det(\GR) 
   \left( 1 + {2
  \sigma^4 \thetav^\T \GR^{-1} \thetav \over 1 + \sigma^2
  \|\thetav\|^2}   \right). 
\end{align}

As a simple example, assume $\hat{\hv}$ is zero-mean with uncorrelated real and
imaginary parts of variance $(1-\sigma^2)/2$. Then $\GR$ is a scalar multiple of the
identity, implying both $\Jm(\thetav)$ and $c(\thetav)$ depend only on
$\|\thetav\|$. In that case, the {Jeffreys prior} is \emph{isotropic} in
$\mathbb{R}^{2\nt}$. One can write $\thetav = r\,\uv$, where $r\ge 0$ is the
radius and $\uv \in \Sphere_{2\nt-1}$ is a uniform direction vector with $\|\uv\|=1$. 
The radial density becomes
\begin{align}
  p(r) &\propto  2^{-\lambda^* r^2} \left(\frac{2(1-\sigma^2)}{1 + \sigma^2
  r^2} \right)^{\nt}  \nonumber \\
  &\phantom{\propto} \ \cdot \sqrt{ 1 + \frac{2 \sigma^4}{1-\sigma^2} { r^2  \over 1 + \sigma^2
  r^2 } }  r^{2\nt-1}, \quad r\ge0. 
\end{align}%
From Figure~\ref{fig:JP-MIMO}, one observes that when the estimation error increases the average radius reduces to avoid excessive self-interference. The Jeffreys factor increases substantially with decreasing estimation error.

\begin{figure}[t]
\centering
\includegraphics[width=\columnwidth]{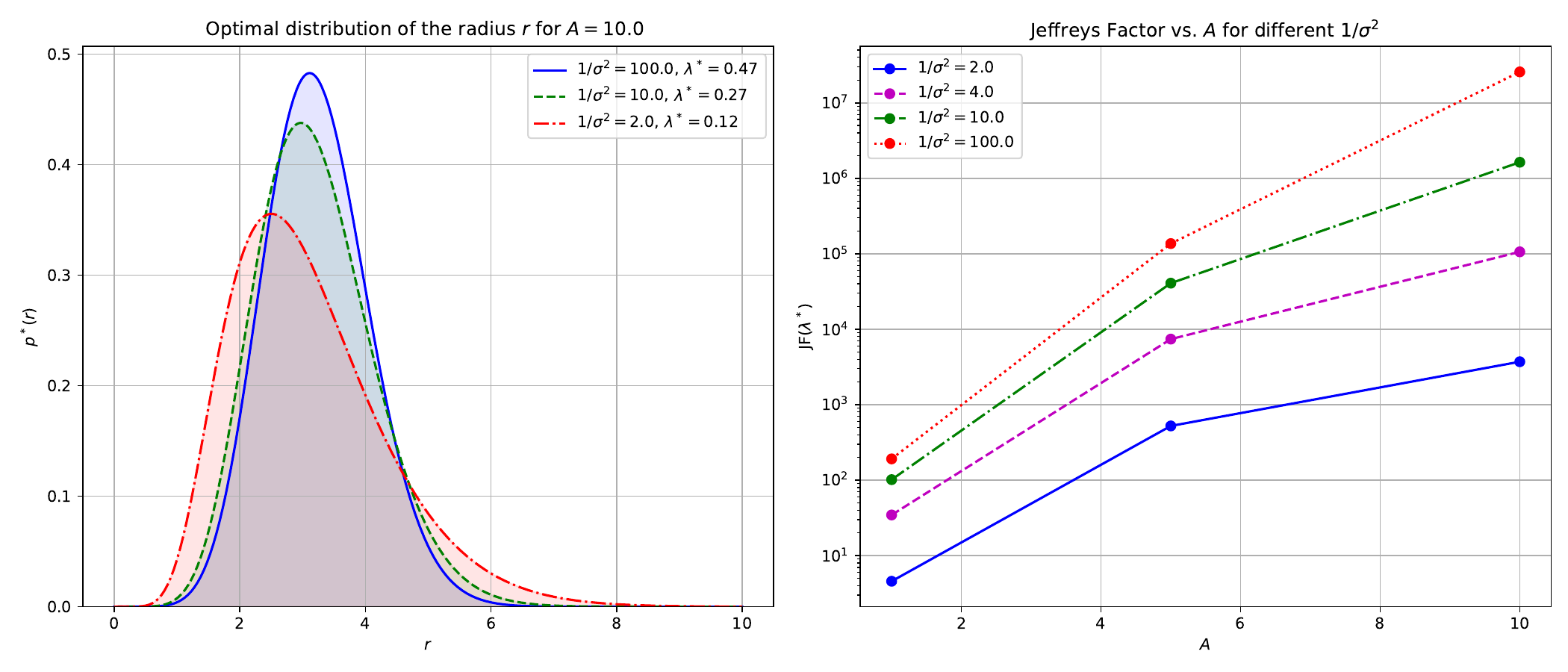}
\caption{MIMO channel with imperfect CSIR, $\nt=4$ transmit antennas, peak-power constraint $A$, average power constraint $P=A^2/9$. Left: optimal distribution on the input radius derived from the Jeffreys prior. Right: The Jeffreys factor.}
\label{fig:JP-MIMO}
\end{figure}

\revision{
In the extreme case $\hat{\Hm}=0$, corresponding to the {noncoherent channel},
the conditional distribution in~\eqref{eq:tmp2111} reduces to
\[
p(y\mid\xv)=\phi_c\big(y;\,1+\sigma^2\|\xv\|^2\big).
\]
In this case, the output distribution depends on the input only through its norm.
Defining $\theta=\|\xv\|$ with parameter space $\Theta=[0,A]$, the Fisher information
with respect to~$\theta$ is
$
J(\theta)=\frac{4\sigma^4\theta^2}{(1+\sigma^2\theta^2)^2},
$
which is independent of the number of transmit antennas.
The corresponding Jeffreys factor admits the following closed-form expression:
\begin{equation}
\JF(\lambda)
= 2^{\lambda/\sigma^2}
\!\left[
E_1\!\left(\tfrac{\lambda\ln 2}{\sigma^2}\right)
- E_1\!\left(\tfrac{\lambda\ln 2}{\sigma^2}\big(1+\sigma^2A^2\big)\right)
\right],
\end{equation}%
where $E_1(x):=\int_x^\infty \frac{e^{-t}}{t}\,dt$ denotes the exponential integral function.
}

\subsection{Optical SIMO Poisson channel}
\label{subsec:poisson-opt}

We next consider an \emph{optical intensity} channel, where the observed
signal is a Poisson random variable whose intensity depends on the
input and potential fading. Specifically, for the SIMO setup, each
output $y_i$ follows
\begin{equation}
y_i \sim \mathrm{Poisson}(\mu), \quad \mu = h_i\, x + \mu_{0,i},
\end{equation}
where $x\ge 0$ is the (scalar) transmitted intensity, $h_i\ge0$ is the fading
coefficient, and $\mu_{0,i}\ge0$ is the background intensity. The distribution
is thus
\begin{equation}
p(y_i \mid x, h_i, \mu_{0,i}) =
\frac{(h_i\,x+\mu_{0,i})^{y_i}}{y_i!}\exp\{-(h_i\,x+\mu_{0,i})\}, 
\end{equation}
where $y_i = 0, 1, \ldots$.  We assume that $(h_i,\mu_{0,i})$ are mutually independent and are i.i.d.\ across $i$ and known at the receiver but not at the transmitter. Thus,
we have
\begin{equation}
 p(y_i,h_i,\mu_{0,i} \mid x) = p(y_i\mid x, h_i, \mu_{0,i})\,
 p(h_i)\, p(\mu_{0,i}). 
\end{equation}
Again, the parameterization $x\mapsto \{p(y,h,\mu \mid x)\}_{y,h,\mu}$ of the
per-output distribution is one-to-one and smooth. We set $\theta = x$, 
$\Theta = [0, A]$, and $c(\theta) = \theta^2$.  
The Fisher information is 
\begin{equation}
J(\theta) = \E_{h,\mu}\biggl[\frac{h^2}{h\,\theta + \mu}\biggr].
\end{equation}

Without fading and background randomness, e.g., $h=1$, the 
Jeffreys prior becomes 
\begin{equation}
  w_{J,\lambda^*}(\theta) \propto \frac{2^{-\lambda^*
  \theta^2}}{\sqrt{\theta+\mu}},\quad \theta \in \Theta.
\end{equation}
Figure~\ref{fig:JP-optical} shows the resulting Jeffreys prior and Jeffreys
factor for different background intensities~$\mu = \alpha A$. One can observe how the Jeffreys factor decreases with increasing background intensity.

\begin{figure}[t]
\centering
\includegraphics[width=\columnwidth]{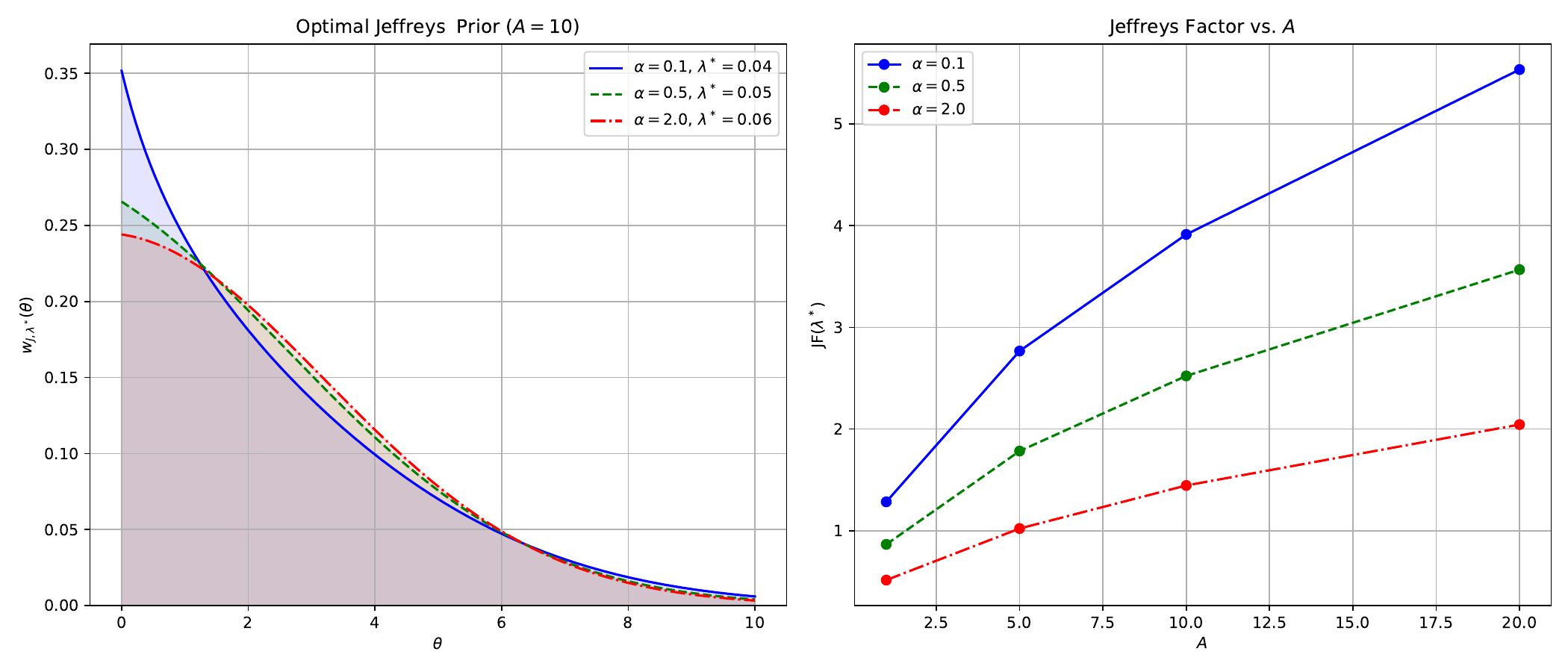}
\caption{Jeffreys prior (left) and Jeffreys factor (right) for a SIMO Poisson
intensity channel with peak intensity $A$, background intensity $\mu = \alpha A$ for different $\alpha$. Average power constraint $P=A^2/9$. }
\label{fig:JP-optical}
\end{figure}

\subsection{1-bit ADC with dithering}
\label{subsec:1bit-dithering}

In the two preceding examples, the channel state arises naturally~(e.g., from
fading or background processes). Here, we illustrate how \emph{artificial
dithering} can be introduced in a SIMO channel with 1-bit ADCs to
potentially improve capacity, an application that neatly fits the same
Bayesian asymptotic framework.

Suppose each output is 1-bit quantized as
\begin{equation}
y_i =\mathrm{sign}(x + z_i - s_i ), \quad i\in[\nr],
\end{equation}
where $z_i\sim\mathcal{N}(0,1)$, $i\in[\nr]$, are i.i.d.~noise and $s_i$ is an
i.i.d.~\emph{dithering} random variable, known at the receiver but not at the
transmitter. Conceptually, $s_i$ artificially ``shifts'' the observation,
spreading out the threshold across receive antennas. If $s_i=0$, this reduces
to the conventional 1-bit ADC channel. 

The per-antenna output distribution is effectively 
\begin{equation}
  p(y, s \mid x) = p(s)\;Q\bigl((s-x)\,y\bigr), \quad y\in\{+1,-1\},\,s\in\mathbb{R}.
\end{equation}
We can check that $x\mapsto \{p(y,s \mid x)\}_{y,s}$ is one-to-one and smooth.
Let $\theta = x$ and $\Theta = [-A, A]$, we can compute the Fisher
information 
\begin{equation}
J(\theta) =\E_{s}\!\biggl[ \frac{\phi^2(\theta-s)}{ Q(\theta-s)\bigl[1-Q(\theta-s)\bigr] } \biggr].
\end{equation}
Figure~\ref{fig:JP-dither} shows the resulting Jeffreys prior and
factor for a uniform dithering distribution $p(s)$ over $N$ equally spaced
points in $[-\alpha A, \alpha A]$. Even for $N=3$, dithering can significantly
increase the Jeffreys factor, especially at larger $A$.  Interestingly, the
optimal $\alpha$ does not exceed $1$. In practice, dithering need not be truly
random: one can partition the antennas into $N$ groups and apply a fixed shift
for each group, so long as it remains independent of $z_i$. This again
underscores how \emph{artificial channel states} can be incorporated into the
Bayesian asymptotic framework in which the potential performance gain may be
analyzed quantitatively. 

\newrevision{
Note that the beneficial effect of dithering in multi-antenna systems has been previously observed in the context of coarse quantization and generalized mutual information~(GMI) analysis (see, e.g., \cite{liang2016mixed, wang2022generalized}). In contrast, our formulation makes the impact of dithering explicit through the Fisher information and the resulting Jeffreys factor, thereby providing a simple and quantitative characterization of the associated capacity gain within the Bayesian asymptotic framework.
}

\begin{figure}[t]
\centering
\includegraphics[width=\columnwidth]{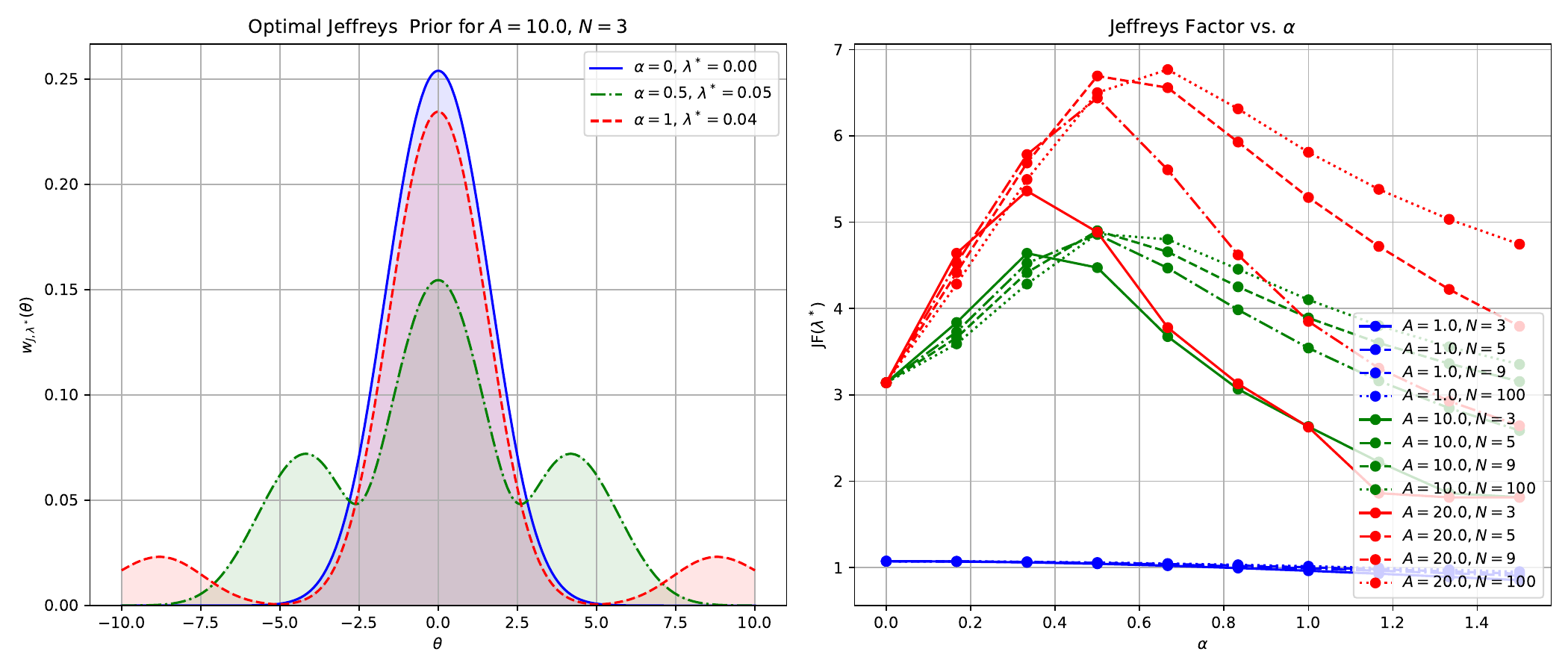}
\caption{SIMO channel with 1-bit ADCs and artificial dithering in $[-\alpha A,\,\alpha A]$. Peak amplitude $A$, average power constraint $P = A^2/9$. Left: Jeffreys prior for different values of $\alpha$. Right: The Jeffreys factor for different values of $A$ and $N$. Solid, dashed, dotted, and dash-dotted lines correspond to $N = 3, 9, 20$, and $100$, respectively.}
\label{fig:JP-dither}
\end{figure}

\section{Constellation Design}
\label{sec:constellation}

Although the {continuous} Jeffreys prior is asymptotically optimal, it may
be impractical to implement in many communication systems. Instead, a
{discrete} constellation~(finite input alphabet) is often desired. In this
section, we present a systematic approach to designing a constellation that
closely approximates the Jeffreys prior for scalar inputs. We will discuss how
this approach is conceptually similar to compander-based quantization, where a
nonlinear transform ``uniformizes'' the signal prior to quantization. 
\revision{Generalization to the vector inputs will also be discussed. }

\subsection{Jeffreys constellation}
Suppose the optimal (continuous) prior is $w_{J,\lambda^*}(\theta)$ over $\theta\in\Theta$,
with cumulative distribution function~(cdf)
\begin{equation}
  F(u) := \int_{-A}^u w_{J,\lambda^*}(x) d x. 
  \label{eq:cdf}
\end{equation}
Then one can construct $\theta$ as
\begin{equation}
  \theta = g(u) := {F}^{-1}(u), \quad u\in[0,1]. 
\end{equation}%
If $u$ is uniformly distributed on $[0,1]$, then $\theta=g(u)$ follows the
Jeffreys prior $w_{J,\lambda^*}(\theta)$. To discretize $\theta$, let $\mathcal{S}$ be a
finite set of equally spaced points in $[0,1]$. We then define the
\emph{Jeffreys constellation}
\begin{equation}
  \mathcal{X}_{J}
  :=
  \bigl\{\,c_P\,F^{-1}(u):\;u\in\mathcal{S}\bigr\},
\end{equation}
where $c_P\le 1$ is a scaling factor ensuring that the average-power constraint
$P$ is satisfied. Essentially, this approach applies a smooth,
nonlinear mapping to a uniform grid of $u$, yielding a finite set of input
points approximating the \emph{continuous} Jeffreys prior.

\begin{remark}
This design mirrors the idea of \emph{compander quantization} in classical
signal processing, where a nonlinear transform (the ``companding'' function)
re-maps the signal into a near-uniform distribution before applying a uniform
quantizer.  Here, the inverse transform $g^{-1}(\theta)$ is precisely the
Jeffreys prior's cdf, acting like a compander that \emph{uniformizes} the
optimal input distribution. In effect, it reparameterizes the conditional
channel distribution so that the Jeffreys prior in the new parameter space
becomes uniform.  
\end{remark}

Figure~\ref{fig:constellation_design}~(left) compares the performance of this
Jeffreys constellation with a standard pulse-amplitude modulation (PAM) grid
for a SIMO channel with 1-bit ADC. For each constellation, we compute the mutual information with uniform input distribution and an optimized input distribution~(via the Blahut–Arimoto algorithm). \revision{The Jeffreys constellation consistently outperforms the conventional PAM design, with a significant gain when the SNR is higher~($A=10$). 
Moreover, further distribution optimization yields only minor gains for the Jeffreys constellation.} 
This suggests that uniform probabilities over the Jeffreys constellation points are already near optimal.

\begin{figure*}[t]
\centering
\includegraphics[width=2\columnwidth]{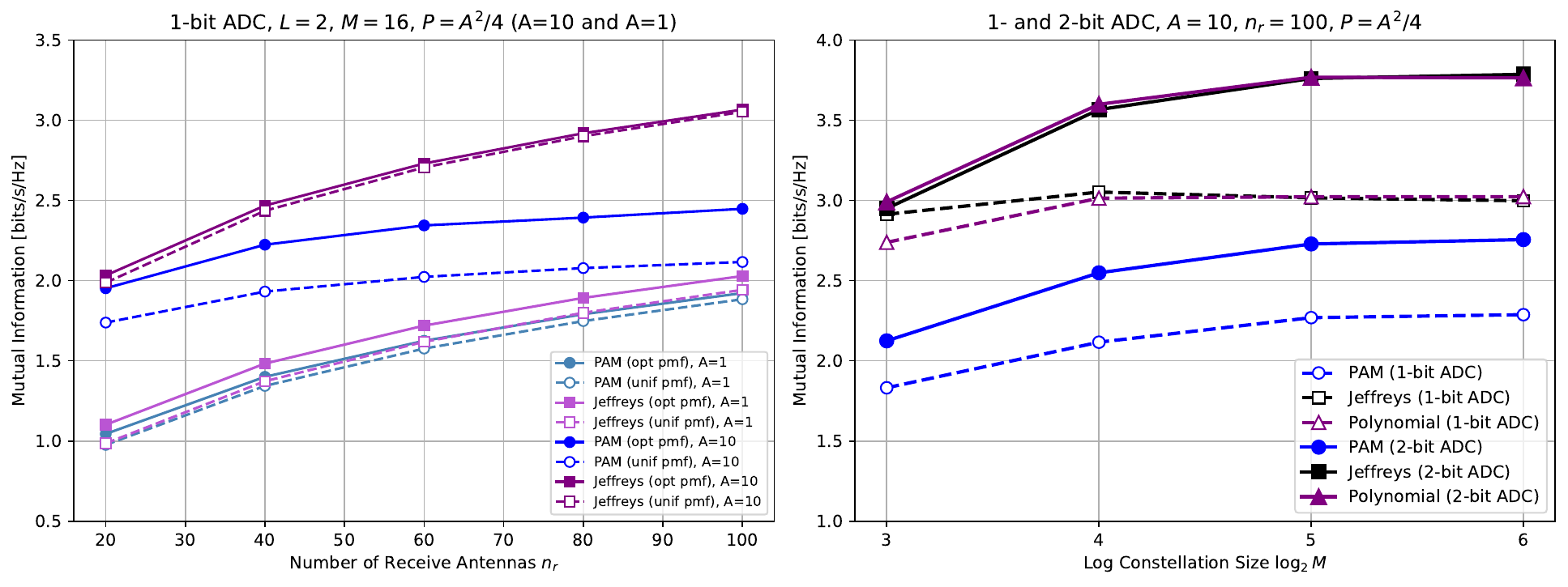}
\caption{\revision{Constellation design for a SIMO channel with low-resolution ADCs. 
{Left:} 1-bit ADC with constellation size $M=16$. 
{Right:} 1- and 2-bit ADC with $\nr=100$ antennas, comparing Jeffreys,
polynomial, and standard PAM constellations.}}
\label{fig:constellation_design}
\end{figure*}

\subsection{Approximate Jeffreys constellation}

In many channels of interest, $w_{J,\lambda^*}(\theta)$ does not admit a closed-form cdf,
making it difficult to compute $F^{-1}(\cdot)$ directly. To address this, we can
approximate $F^{-1}$ via a parametric function 
\[
  g_\xi(u),
  \quad
  \xi \in \Xi,
\]
and choose $\xi$ so that $g_\xi(\cdot)$ is ``close'' to the Jeffreys mapping~$F^{-1}(\cdot)$. 
We can then define an \emph{approximate Jeffreys constellation}
\begin{equation}
  \mathcal{X}_{{J}, \xi^*}
  :=
  \bigl\{\,c_P\,g_{\xi^*}(u):\;u\in\mathcal{S}\bigr\}, \label{eq:approx_Jeffreys}
\end{equation}
where $\xi^*$ is the fitted parameter vector, and $c_P\le 1$ again enforces the average-power constraint.

\revision{

In the following, we propose an approximation strategy that can be implemented via convex optimization.
Recall from~\eqref{eq:C(w)} that the capacity loss incurred by using an input distribution $w$
instead of the optimal Jeffreys prior $w_{J,\lambda^*}$ is
\[
D(w\, \|\, w_{J,\lambda^*})
\;-\;
\lambda^* \, \mathbb{E}_{\theta\sim w}\!\left[c(\theta)-P\right],
\]
where $\lambda^*\ge 0$ satisfies the complementary slackness condition
$ \lambda^*\,\mathbb{E}_{\theta\sim w_{J,\lambda^*}}\!\left[c(\theta)-P\right] = 0. $  
For any feasible $w$, we have $\mathbb{E}_{\theta\sim w}[c(\theta)-P]\le 0$, and hence
\[
D(w \,\|\, w_{J,\lambda^*})
-
\lambda^* \mathbb{E}_{\theta\sim w}[c(\theta)-P]
\;\ge\;
D(w \,\|\, w_{J,\lambda^*}).
\]
Conversely, when $D(w \,\|\, w_{J,\lambda^*})$ is small, Pinsker's inequality guarantees that
$w$ is close to $w_{J,\lambda^*}$ in total variation, and the penalty term
$\lambda^* \mathbb{E}_{\theta\sim w}[c(\theta)-P]$
is then close to
$\lambda^* \mathbb{E}_{\theta\sim w_{J,\lambda^*}}[c(\theta)-P] = 0$.
Therefore, we propose to solve the approximation problem 
\[
\min_{w\in\mathcal{W}}\ D(w \,\|\, w_{J,\lambda^*}). 
\qquad\qquad (P_1)
\]
We take $\mathcal{W}:=\{f_\xi : \xi\in\Xi\}$, where $f_\xi$ is a polynomial pdf of degree $d$, i.e.,
$f_\xi(\theta) = \sum_{i=0}^d \xi_i \theta^i.$
The feasible set~$\Xi$ enforces normalization and nonnegativity:
\[
\Xi := \left\{
\xi \in \mathbb{R}^{d+1} :
f_\xi \ge 0 \text{ on } [-A,A],\;
\int_{-A}^{A} f_\xi(\theta)\, d\theta = 1
\right\}.
\]
Using
\[
\alpha_i = \frac{A^{i+1} - (-A)^{i+1}}{i+1},
\]
we eliminate $\xi_0$ via
\[
\xi_0 = \frac{1 - \sum_{i=1}^d \xi_i \alpha_i}{\alpha_0}.
\]
Thus $\xi_1,\dots,\xi_d$ are the free optimization variables.

The approximation problem becomes
\[
\min_{\xi\in\Xi}\     D\bigl(f_\xi \,\|\, w_{J,\lambda^*}\bigr). 
\qquad\qquad (P_2)
\]

To handle the constraint $f_\xi(\theta)\ge 0$, we use a logarithmic barrier, as done in interior point methods.
Define the penalty as the divergence between the uniform distribution and $f_\xi$, i.e.,
\[
D(\mathbf{1}\,\|\,f_\xi)
=
\int_{-A}^{A}
\ln\!\left(\frac{(2A)^{-1}}{f_\xi(\theta)}\right)\frac{d\theta}{2A},
\]
which equals $+\infty$ when $f_\xi(\theta) \le 0$ on a positive-measure set.  
We obtain the penalized optimization problem parameterized by $\gamma > 0$ 
\[
\min_{\xi\in\Xi}
\;
D(f_\xi \,\|\, w_{J,\lambda^*})
+ \gamma\, D(\mathbf{1}\,\|\,f_\xi). 
\qquad\qquad (P_3(\gamma))
\]
Solving $P_3(\gamma)$ for a sequence of $\gamma\to 0$ yields the solution to $P_2$.
Define the objective function $\mathcal{L}(\xi) := D(f_\xi\|w_{J,\lambda^*}) + \gamma D(\mathbf{1}\|f_\xi)$, and 
\begin{equation}
\psi_\xi(\theta)
:= \ln f_\xi(\theta) + 1
   + \lambda^* c(\theta)
   - \tfrac12 \ln J(\theta)
   - \frac{\gamma}{2A}\frac{1}{f_\xi(\theta)}.
\end{equation}
The gradient and Hessian have the following analytical expression:
\begin{align}
\nabla_\xi \mathcal{L}(\xi)
&= \int_{-A}^{A}
   \nabla_\xi f_\xi(\theta)\,
   \psi_\xi(\theta)\,
   d\theta, \\
\nabla_\xi^2 \mathcal{L}(\xi)
&= \int_{-A}^{A}
   \nabla_\xi f_\xi(\theta)
   \nabla_\xi f_\xi(\theta)^\top
   \left(
      \frac{1}{f_\xi(\theta)}
      + \frac{\gamma}{2A}
        \frac{1}{f_\xi(\theta)^2}
   \right)
   d\theta.
\end{align}
We can verify that the Hessian is positive definite, which implies that $P_3(\gamma)$ is strictly convex. 
All integrals \newrevision{involved} above can be evaluated numerically using a sufficiently fine discretization of $[-A,A]$ and evaluating $c(\theta)$ and $J(\theta)$ at the grid points.  
In practice, only a few Newton steps are needed for the convergence to the optimal solution $\xi^*$ of $P_3(\gamma)$.
Given the optimal coefficients $\xi^*$, the cdf has a closed form:
\[
F_{\xi^*}(\theta)
=
\sum_{i=0}^d
\xi^*_i\,\frac{\theta^{i+1}-(-A)^{i+1}}{i+1}.
\]
The inverse cdf $F_{\xi^*}^{-1}(u)$ can then be obtained to machine precision $\varepsilon$
by solving $F_{\xi^*}(\theta)=u$, e.g., using bisection. Each iteration evaluates $F_{\xi^*}$ and $p_{\xi^*}$, giving a total cost\footnote{With Newton's method, the cost may be reduced to $O\!\left(d\,\log\log(1/\varepsilon)\right)$. } $O\!\left(d\,\log(1/\varepsilon)\right).$
Finally, letting $g_{\xi^*} = F_{\xi^*}^{-1}$, we construct the approximate Jeffreys constellation
via~\eqref{eq:approx_Jeffreys}.  
We refer to this as the \emph{approximate polynomial Jeffreys constellation}.

In fact, this approach extends directly to any linear family of approximators of the form $f_{\xi}(\theta) = \sum_{i=0}^{d} \xi_i \phi_i(\theta)$ provided that each basis function $\phi_i$ admits an antiderivative that can be evaluated in closed form, i.e., one can compute $\int_{-A}^{\theta} \phi_i(t) dt$ for any $\theta\in\Theta$ and $i=0,\ldots,d$. 
Beyond polynomial families, this includes, for example, trigonometric bases, wavelet-type bases, and many other sets of functions with tractable integrals.
}

Figure~\ref{fig:constellation_design} (right) illustrates how the Jeffreys
constellation, the approximate polynomial Jeffreys constellation, and a standard PAM compare in a SIMO
channel \revision{with 1- and 2-bit ADC} and $\nr=100$ antennas. Each constellation is used with a
uniform distribution subject to $P=A^2/4$. The Jeffreys
constellation and its approximate version achieve similar high rates, and both are significantly higher than
the one achieved by the standard PAM constellation.

\begin{remark}
  It is worth noting that in the numerical example above, we can compute the
  mutual information exactly because both the input and output alphabets are
  discrete and of manageable size.\footnote{Here, the sufficient statistic of
  the output follows a multinomial distribution, thanks to the i.i.d.~property
  of the channel.} In such a scenario, one might also consider optimizing the
  input distribution over a dense PAM grid. However, that strategy still requires
  enumerating the entire output space, an approach that becomes infeasible for
  higher-dimensional or more complicated~(e.g., continuous output) channels. 
  By contrast, our method relies only on the Jeffreys prior and thus avoids the ``curse of
  dimensionality'' in the output, providing a more scalable and systematic
  solution for constellation design in large-scale MIMO systems.
\end{remark}

\subsection{Multi-dimensional constellations}

When the input is not scalar, the constellation design can be more involved.
Nonetheless, one can often generalize the same techniques used in the
single-dimensional setting. 

For example, in a MIMO channel with isotropic fading, such as that in
Section~\ref{sec:app_MIMO}, the magnitude and direction of the transmitted
vector can be treated separately. From the Jeffreys prior, one obtains an
optimal radial (magnitude) distribution via a mapping \(F_{r}^{-1}\) of a
uniform grid \(\Sc_r\). Simultaneously, the directional component in
\(\mathbb{R}^{d}\) (or \(\mathbb{C}^{\nt}\) in real lifting) can be discretized
by a suitable tessellation of the unit sphere \(\Sc_v^{d-1}\). This approach is
conceptually similar to polar- or spherical-coordinate constellations often
employed in vector quantization and multi-antenna communications
\cite{ngo2019cube, love2003grassmannian}.  

More generally, one may generate the parameters
of interest $\theta_1,\ldots,\theta_d$ \emph{sequentially} using a chain rule on the probability
distributions:
\[
  F(\theta_i \mid \theta_1,\ldots,\theta_{i-1}),
\]
the conditional distribution of \(\theta_i\) given the previously selected
\(\theta_1,\dots,\theta_{i-1}\). By applying the inverse function
\(F^{-1}(u\mid \theta_1,\dots,\theta_{i-1})\) to a uniform grid in \(u\), one
obtains a discrete set for \(\theta_i\). Repeating this process for
\(i=1,\dots,d\) yields a multi-dimensional constellation aligned with the
Jeffreys prior in each conditional step. 

In either case, polynomial or similar approximations~(as discussed in the above scalar case) can be used to avoid explicit computation of
the inverse (conditional)~cdf. This strategy generalizes the scalar
design principles and may ensure that the resulting finite set of transmit vectors
continues to approximate the optimal Jeffreys prior well, even in higher
dimensions.

\revision{
\begin{remark}
  Without any additional structure that may reduce complexity, we must approximate for each $i=1,\ldots,d$, the conditional cdf $F({\theta_i} \mid \theta^{i-1})$ for each $\theta^{i-1} \in \mathcal{A}^{(i-1)}$ where $\mathcal{A}^{(i-1)}$ is the $(i-1)$-dimensional constellation previously constructed, whose size is $M^{i-1}$. This leads to a total complexity \newrevision{proportional} to $1+M+\cdots+M^{d-1} = \Theta(M^d)$. 
If, however, the model exhibits a Markovian structure of order $D$, then the approximation only needs to be performed for each $\theta_{i-D}^{i-1}$ in \newrevision{a} $D$-dimensional grid with size $M^D$, yielding a reduced complexity of $\Theta(M^D)$. 
In the fully general case, as the dimension $d$ grows, the complexity increases exponentially. Indeed, it is well known that sampling accurately \newrevision{from} high dimension distributions is computationally hard in general. 
\end{remark}
}

\section{Receiver Architecture}
\label{sec:receiver}

\subsection{Sufficient statistic and exact computation}

A \emph{sufficient statistic} of the output regarding the input is the (log-)likelihood function. 
Specifically, for a given parameter~(input) vector~$\thetav$ and output vector~$\yv=[y_1,\dots,y_\nr]$, we have
\begin{equation}
  \ln p(\yv \mid \thetav) = \sum_{k=1}^\nr \ln p(y_k \mid \thetav), \label{eq:llh}
\end{equation}%
where $\ln p(y_k \mid \thetav)$ is the per-antenna log-likelihood. Note that the log-likelihood~\eqref{eq:llh} only depends on the \emph{empirical distribution}~(also known as \emph{type}) of the output $\yv$, denoted by 
\begin{equation}
  \pi_{\yv}(y) := {1\over \nr} \sum_{k = 1}^{\nr} \ind(y_k = y), \quad y\in\Yc.  
\end{equation}%
Indeed, we can rewrite the log-likelihood function as
\begin{equation}
  \ln p(\yv \mid \thetav) = \nr \sum_{y\in\{\yv\}} \pi_{\yv}(y) \ln p(y_\ell\mid \thetav), 
\end{equation}%
where $\{\yv\}$ is the set of distinct entries in the vector $\yv$. 
Thus, the computational complexity of evaluating the log-likelihood is bounded by $\min\{\nr, |\Yc|\}$, depending on the output size and alphabet cardinality. 

In the finite output alphabet case with $|\Yc| \ll \nr$, the computational
complexity is limited.  For instance, in the extreme case with $1$-bit ADC at
each output, for the computation of the log-likelihood, we need to compute
essentially $p(0\mid\thetav)$ and $p(1\mid\thetav)$ with computational complexity only
depending on the input dimension $\nt$, but not on the output dimension $\nr$.

\subsection{Approximate log-likelihood via output quantization}

Often the output alphabet is large, multi-dimensional, or even unbounded, making the exact computation of the log-likelihood prohibitively complex with a cost proportional to $ \nr $. A practical solution is to reduce this complexity by quantizing the outputs into a finite number of bins.
Assume that the output space $ \Yc $ is partitioned into $ L $ disjoint bins $ \Vc_1, \ldots, \Vc_L $. Defining the empirical probability mass function as
\begin{equation}
  \pi_{\hat{\yv}}(\ell) := {1\over \nr} \sum_{k = 1}^{\nr} \ind(y_k \in \Vc_\ell), \quad \ell\in[L],
\end{equation}%
the log-likelihood can be approximated as 
\begin{equation}
  \ln p(\yv \mid \thetav) \approx \nr \sum_{\ell\in[L]} \pi_{\hat{\yv}}(\ell) \ln p(\ell \mid \thetav), \label{eq:llh_approx}
\end{equation}%
where we define $p( \ell\mid \thetav) := \int_{\Vc_\ell} p(y\mid \thetav) d y$, 
i.e., the probability that the output falls into bin~$\ell$. 
The approximation becomes better when the bin becomes smaller, i.e., when $L$ becomes larger. The complexity in \eqref{eq:llh_approx} is now proportional to $L$, independent of $\nr$. This is highly beneficial when $\nr$ is large or $\Yc$ is unbounded. 
\comment{Moreover, the \emph{storage complexity} --- i.e., the number of bits needed to store the type $\pi_{\hat{\yv}}$ --- is less than $L \log \nr$ bits, an improvement over storing the entire $\nr$-dimensional output that requires $\nr\log L$ bits. This reduction is relevant especially when the received signal must be transported for further processing rather than being decoded locally. }

\subsection{Achievable rate for continuous output alphabet and approximate log-likelihood}

In the following, we consider the case with continuous output alphabets, and investigate the capacity loss due to the approximation. 
First, we have the following result on the achievable rate when the receiver applies an approximate computation of the log-likelihood function. 
\begin{proposition}
  \label{prop:1}
  Assume the receiver computes the log-likelihood function with the approximation~\eqref{eq:llh_approx}, \comment{without the transmitter being aware of this approximation.}  Let $\Jm(\thetav)$ be the Fisher information of the original (continuous) output, and let $\Jm_{\!L}(\thetav)$ be the Fisher information induced by the quantized output using bins $\{\Vc_\ell\}_{\ell=1}^L$.  Define
\begin{equation}
  e_L \;:=\; \int_{\Theta} \ln\frac{\det(\Jm(\thetav))}{\det(\Jm_{\!L}(\thetav))} \, d\thetav.
  \label{eq:eL_def}
\end{equation}
Then, there exists a constant $ \kappa_0 $ (which depends on the parameter space $ \Theta $ and the power constraint $ P $) such that the achievable rate computed via the approximation in $ \eqref{eq:llh_approx} $ is within $\kappa_0 \, e_L$ bits of the asymptotic capacity $C(P)$.
\end{proposition}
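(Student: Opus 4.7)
I would first interpret the approximation \eqref{eq:llh_approx} as exact maximum-likelihood decoding on the \emph{quantized} channel $\thetav\to\hat\yv$, where $\hat y_k\in[L]$ denotes the bin index of $y_k$. Because the transmitter continues to use the prior $w_{J,\lambda^*}$ designed for the original (unquantized) channel, the achievable rate equals the mutual information $I_L(\thetav;\hat\yv)$ of the quantized channel evaluated at the \emph{mismatched} input $w_{J,\lambda^*}$. The first concrete step is to apply Theorem~\ref{thm:main} twice: once to the original channel to obtain $C(P)=\tfrac{d}{2}\log\tfrac{\nr}{2\pi e}+\log\JF(\lambda^*)+o(1)$, and once to the quantized channel via the general formula \eqref{eq:C(w)} with $w=w_{J,\lambda^*}$:
\begin{equation*}
I_L = C_L(P) - D(w_{J,\lambda^*}\|w_{J_L,\lambda^*_L}) + \lambda^*_L\bigl(\E_{w_{J,\lambda^*}}[c]-P\bigr) + o(1).
\end{equation*}
The regularity conditions of Theorem~\ref{thm:main} transfer to the quantized channel whenever the bin probabilities $p(\ell\mid\thetav):=\int_{\Vc_\ell}p(y\mid\thetav)\,dy$ are smooth in $\thetav$.

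The heart of the argument is a term-by-term cancellation. Writing $w_{J,\lambda^*}(\thetav)=2^{-\lambda^* c(\thetav)}\sqrt{\det\Jm(\thetav)}/Z_{\lambda^*}$ with $Z_{\lambda^*}=2^{-\lambda^* P}\JF(\lambda^*)$, and the analogous expression for $w_{J_L,\lambda^*_L}$, a direct expansion gives
\begin{equation*}
D(w_{J,\lambda^*}\|w_{J_L,\lambda^*_L}) = (\lambda^*_L-\lambda^*)\bigl(\E_{w_{J,\lambda^*}}[c]-P\bigr) + \tfrac{1}{2}\E_{w_{J,\lambda^*}}\!\left[\log\tfrac{\det\Jm}{\det\Jm_L}\right] + \log\JF_L(\lambda^*_L) - \log\JF(\lambda^*).
\end{equation*}
Substituting into $C(P)-I_L = [\log\JF(\lambda^*)-\log\JF_L(\lambda^*_L)] + D - \lambda^*_L(\E_{w_{J,\lambda^*}}[c]-P) + o(1)$, the Jeffreys-factor logarithms cancel exactly, leaving
\begin{equation*}
C(P)-I_L = \lambda^*\bigl(P-\E_{w_{J,\lambda^*}}[c]\bigr) + \tfrac{1}{2}\E_{w_{J,\lambda^*}}\!\left[\log\tfrac{\det\Jm(\thetav)}{\det\Jm_L(\thetav)}\right] + o(1).
\end{equation*}
The first term vanishes by complementary slackness: either $\lambda^*=0$, or $\lambda^*>0$ and then the KKT condition defining $\lambda^*$ forces $\E_{w_{J,\lambda^*}}[c]=P$. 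Moreover $\log(\det\Jm/\det\Jm_L)\ge 0$ pointwise by the data-processing property of Fisher information under output quantization, so both sides are non-negative.

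The final step converts the expectation against $w_{J,\lambda^*}$ into the Lebesgue integral $e_L$. Compactness of $\Theta$, continuity of $\sqrt{\det\Jm}$ and $c$, together with $2^{-\lambda^* c(\thetav)}\le 1$, yield a uniform upper bound $w_{J,\lambda^*}(\thetav)\le K=K(\Theta,P)$ on the prior density, so that
\begin{equation*}
\E_{w_{J,\lambda^*}}\!\left[\log\tfrac{\det\Jm}{\det\Jm_L}\right] \le K\int_\Theta\log\tfrac{\det\Jm(\thetav)}{\det\Jm_L(\thetav)}\,d\thetav = K\,e_L,
\end{equation*}
giving $\kappa_0=K/2$. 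The main obstacle is establishing this uniform bound with a constant depending only on $\Theta$ and $P$: one must lower-bound the partition function $Z_{\lambda^*}$ away from zero, which requires tracking how $\lambda^*$ grows as $P$ shrinks. Intuitively $\lambda^*=O(1/P)$ in the small-$P$ regime, so $K$ is finite for each fixed $P>0$ but may diverge as $P\to 0$; any mild additional regularity of the channel (for example, a positive lower bound on $\sqrt{\det\Jm}$ over $\Theta$) would streamline this step considerably.
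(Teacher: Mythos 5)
Your proof is correct and follows essentially the same route as the paper: apply the achievable-rate formula~\eqref{eq:C(w)} to the quantized channel with the mismatched input $w_{J,\lambda^*}$, expand $D(w_{J,\lambda^*}\|w_{J,\lambda^*_L})$ to cancel the Jeffreys factors and tilting terms via complementary slackness, and bound the residual $\tfrac12\E_{w_{J,\lambda^*}}[\log\tfrac{\det\Jm}{\det\Jm_L}]$ by a uniform bound on the prior density. Two minor points: since $e_L$ is defined with $\ln$ while you expand in $\log$, your $\kappa_0$ should carry an extra $1/\ln 2$ (absorbed into the constant anyway); and your concern about $\lambda^*$ diverging as $P\to 0$ is moot here because $P$ is a fixed constant and the statement explicitly allows $\kappa_0$ to depend on $P$, exactly as the paper does.
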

\begin{proof}
  See Appendix~\ref{app:prop1}. 
\end{proof}
\comment{It is important to emphasize that the transmitter employs the same input distribution as if the receiver computed the log-likelihood exactly. Despite the mismatch, the achievable rate remains within a gap of $\kappa_0 e_L$ from the capacity $C(P)$. We see that $e_L$ is equivalent to the capacity loss up to a multiplicative constant $\kappa_0$. In the remainder of the section, we focus on deriving an upper bound on $e_L$. }

For clarity, let us rewrite explicitly $\Jm(Y; \thetav) = \Jm(\thetav)$, $\Jm(\hat{Y}; \thetav) = \Jm_{\!L}(\thetav)$. 
Since $\hat{Y}=\Q(Y)$ is a function of $Y$ and $y\mapsto \Q(y)$ does not depend on $\thetav$, the chain rule for Fisher information~(see, e.g., \cite{zamir1998proof}) implies that 
\begin{align}
  \Jm(Y;\thetav) &= \Jm(Y, \hat{Y}; \thetav) \\
  &= \Jm(\hat{Y}; \thetav) + \Jm({Y}; \thetav \mid \hat{Y}), 
\end{align}%
where $\Jm(Y;\thetav \mid \hat{Y}) := \sum_{\hat{y}} \mathbb{P}(\hat{Y} = \hat{y})\, \Jm(Y;\thetav \mid \hat{Y}=\hat{y})$ with $\Jm(Y;\thetav \mid \hat{Y}=\hat{y})$ defined as 
\begin{equation}
  \mathbb{E}\Bigl[\nabla_\thetav \ln p_\thetav(Y\mid \hat{Y}=\hat{y})\,\nabla_\thetav \ln p_\thetav(Y\mid \hat{Y}=\hat{y})^\T \,\Big|\,\hat{Y} = \hat{y} \Bigr]. 
\end{equation}%
Consequently,
\[
\ln\frac{\det(\Jm(Y;\thetav))}{\det(\Jm(\hat{Y};\thetav))}
=\ln\det\Bigl(I + \Jm(\hat{Y};\thetav)^{-1} \Jm(Y;\thetav \mid \hat{Y})\Bigr).
\]
Since for any positive semi-definite matrix $\bm{Q}$ the inequality $\ln\det(\Id+\bm{Q}) \le \operatorname{tr}(\bm{Q})$ holds,
we obtain the bound
\begin{equation}
  \ln\frac{\det(\Jm(Y;\thetav))}{\det(\Jm(\hat{Y};\thetav))} \le \operatorname{tr}\Bigl( \Jm(\hat{Y};\thetav)^{-1} \Jm(Y;\thetav \mid \hat{Y}) \Bigr). \label{eq:tmp722}
\end{equation}%
This upper bound vanishes with $\Jm(Y;\thetav \mid \hat{Y})$ in the following way. 
\begin{lemma}
  \label{lemma:phi_L}
  Suppose there exists a sequence $\{\phi_L\}_L$ with $\lim_{L\to\infty} \phi_L = 0$ so that $\Jm(Y;\thetav \mid \hat{Y}) \preceq \phi_L \Id$ for every $\thetav \in \Theta$. Then, for all $\thetav \in \Theta$ and for sufficiently large $L$ (specifically, when $\phi_L \le \lambda_{\min}/2$), it holds that 
  \begin{equation}
\ln\frac{\det(\Jm(Y;\thetav))}{\det(\Jm(\hat{Y};\thetav))} \le {2d \over \lambda_{\min}} \phi_L, \quad \forall\,\thetav\in \Theta,
  \end{equation}%
  where $\lambda_{\min}$ is the infimum of the minimum eigenvalue of $\Jm(Y;\thetav)$ over $\Theta$. 
\end{lemma}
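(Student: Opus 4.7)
The plan is to feed the hypothesis directly into the upper bound \eqref{eq:tmp722} already derived before the lemma statement, converting an operator-norm bound on $\Jm(Y;\thetav\mid\hat{Y})$ into a scalar bound on the trace. The key identity I will rely on is the Fisher information chain rule stated just above the lemma, namely
\[
\Jm(\hat{Y};\thetav)=\Jm(Y;\thetav)-\Jm(Y;\thetav\mid\hat{Y}).
\]

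First, I would use this chain rule together with the standing assumption $\Jm(Y;\thetav)\succeq\lambda_{\min}\Id$ (the infimum of its minimum eigenvalue over $\Theta$) and the lemma's hypothesis $\Jm(Y;\thetav\mid\hat{Y})\preceq\phi_L\Id$ to obtain the Loewner lower bound
\[
\Jm(\hat{Y};\thetav)\;\succeq\;(\lambda_{\min}-\phi_L)\,\Id\;\succeq\;\tfrac{\lambda_{\min}}{2}\,\Id,
\]
valid precisely when $\phi_L\le\lambda_{\min}/2$, which is the regime invoked in the statement. Inverting the Loewner order then gives $\Jm(\hat{Y};\thetav)^{-1}\preceq(2/\lambda_{\min})\Id$, i.e.\ the operator norm of $\Jm(\hat{Y};\thetav)^{-1}$ is at most $2/\lambda_{\min}$.

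Second, I would plug this into \eqref{eq:tmp722}. For positive semidefinite matrices $A,B$, the standard inequality $\operatorname{tr}(AB)=\operatorname{tr}(A^{1/2}BA^{1/2})\le\|A\|\operatorname{tr}(B)$ (a consequence of $A^{1/2}BA^{1/2}\preceq\|A\|\,B$) yields
\[
\operatorname{tr}\bigl(\Jm(\hat{Y};\thetav)^{-1}\Jm(Y;\thetav\mid\hat{Y})\bigr)\;\le\;\tfrac{2}{\lambda_{\min}}\operatorname{tr}\bigl(\Jm(Y;\thetav\mid\hat{Y})\bigr).
\]
Finally, the hypothesis $\Jm(Y;\thetav\mid\hat{Y})\preceq\phi_L\Id$ on a $d\times d$ matrix bounds the trace by $d\phi_L$, so combining with \eqref{eq:tmp722} yields the claimed inequality $\ln[\det\Jm(Y;\thetav)/\det\Jm(\hat{Y};\thetav)]\le (2d/\lambda_{\min})\phi_L$ uniformly in $\thetav\in\Theta$.

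There is no real obstacle here: the argument is pure Loewner-order linear algebra, and all the substantive information-theoretic content (the chain rule and the $\ln\det$-trace inequality) is already supplied in the paragraph preceding the lemma. The only point that warrants a careful sentence is the uniformity in $\thetav$: the condition $\phi_L\le\lambda_{\min}/2$ must be stated with $\lambda_{\min}$ being the \emph{infimum} over $\Theta$ (as in the hypothesis), so that the lower bound $\Jm(\hat{Y};\thetav)\succeq(\lambda_{\min}/2)\Id$ holds for every $\thetav$ simultaneously, which in turn is what allows the final bound to be stated for all $\thetav\in\Theta$.
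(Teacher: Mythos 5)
Your proposal is correct and follows essentially the same route as the paper: apply the Fisher-information chain rule to lower-bound $\Jm(\hat{Y};\thetav)$ in the Loewner order, invoke $\phi_L\le\lambda_{\min}/2$ to obtain $\Jm(\hat{Y};\thetav)^{-1}\preceq(2/\lambda_{\min})\Id$, and bound the trace in \eqref{eq:tmp722} by the product of the operator norm of $\Jm(\hat{Y};\thetav)^{-1}$ and $\operatorname{tr}\bigl(\Jm(Y;\thetav\mid\hat{Y})\bigr)\le d\phi_L$. The only cosmetic difference is that the paper first records the intermediate bound $d\phi_L/(\lambda_{\min}-\phi_L)^{+}$ and then specializes to $\phi_L\le\lambda_{\min}/2$, whereas you apply that condition from the start; the substance is identical.
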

\begin{proof}
Because $\Jm(Y;\thetav \mid \hat{Y}) \preceq \phi_L \Id$, by the decomposition of the Fisher information we have
$\Jm(\hat{Y};\thetav) = \Jm({Y};\thetav) - \Jm(Y;\thetav \mid \hat{Y})  \succeq \Jm({Y};\thetav) - \phi_L\,\Id \succeq (\lambda_{\min} - \phi_L)\,\Id$. From the positive semi-definiteness, we have $\Jm(\hat{Y};\thetav)\succeq (\lambda_{\min} - \phi_L)^{+}\,\Id$. Therefore, 
\[
\operatorname{tr}\Bigl( \Jm(\hat{Y};\thetav)^{-1} \Jm(Y;\thetav \mid \hat{Y}) \Bigr)
\le {d  \over (\lambda_{\min} - \phi_L)^+}\, \phi_L.
\]

When $\phi_L \le \lambda_{\min}/2$, we have $\lambda_{\min} - \phi_L \ge \lambda_{\min}/2$, implying
\begin{equation}
\operatorname{tr}\Bigl( \Jm(\hat{Y};\thetav)^{-1} \Jm(Y;\thetav \mid \hat{Y}) \Bigr) \le \frac{2d}{\lambda_{\min}}\, \phi_L.
\end{equation}%
Finally, substituting this bound into \eqref{eq:tmp722} completes the proof.  
\end{proof}
\comment{From Lemma~\ref{lemma:phi_L} and \eqref{eq:eL_def} in Proposition~\ref{prop:1}, if one can identify universal upper bound $\phi_L$ on the conditional Fisher information matrix $ \Jm(Y;\thetav \mid \hat{Y})$, then we have
\begin{equation}
  e_L \le \frac{2d}{\lambda_{\min}} \Vol(\Theta)\, \phi_L. \label{eq:tmp234} 
\end{equation}%
}

We now summarize the scaling of capacity loss due to approximate log-likelihood computation in the following theorem.
\comment{
\begin{theorem}
\label{thm:cond_fish}
Let $\{p_\thetav:\, \thetav\in\Theta\}$ be a family of densities on $\mathcal{Y}\subset\mathbb{R}^m$ that is twice continuously differentiable in $(\thetav,y)$ for $\thetav\in\Theta\subset\mathbb{R}^d$. Suppose there exists finite constants $\rho$ and $K_\epsilon$ such that
\begin{equation}
  \sup_{\thetav\in\Theta,\; y\in\mathcal{Y}} \Bigl\|\nabla_y\nabla_\thetav \ln p_\thetav(y)\Bigr\| \le \rho, \label{eq:lipschitz}
\end{equation}%
and 
\begin{equation}
\sup_{\thetav\in\Theta} \left(\E \left[ \|\nabla_\thetav \ln p_\thetav(Y) \|^{2+\epsilon} \right]\right)^{2\over 2+\epsilon}  \le K_\epsilon, 
\end{equation}%
where $\epsilon>0$ is such that \eqref{eq:tmp328} holds. 
For any $r>0$ and $L\in\mathbb{N}$, there exists a $(L+1)$-bin quantizer so that 
\begin{align}
  \Jm(Y;\thetav \mid \hat{Y}) \preceq  \left(  K_\epsilon P_{\thetav}(\|Y\| > r)^{\epsilon\over 2+\epsilon} + 36 \rho^2 r^2 L^{-2/m}  \right) \Id. 
\end{align}
Consequently, by optimizing over $r$, we obtain 
\begin{itemize}
  \item for bounded support, i.e., $\exists$~$r_0>0$ so that $\sup_{\thetav\in\Theta} P_{\thetav}(\|Y\| > r_0) = 0$, 
    \begin{equation}
      e_L \le A_0 L^{-2/m};  
    \end{equation}%
  \item for polynomial tail decay, i.e., $\exists$~$\kappa_1>0$ and $\eta_1>0$ so that $\sup_{\thetav\in\Theta} P_{\thetav}(\|Y\| > r ) \le \kappa_1 r^{-\eta_1}$, $\forall\,r>0$, 
    \begin{equation}
		e_L \le A_1\,L^{-\frac{2\eta_1\epsilon}{m(\eta_1\epsilon+2(2+\epsilon))}};
    \end{equation}%
  \item for Gaussian tail decay, i.e., $\exists$~$\kappa_2>0$ and $\eta_2>0$ so that $\sup_{\thetav\in\Theta} P_{\thetav}(\|Y\| > r ) \le \kappa_2 e^{- \eta_2 r^2}$, $\forall\,r>0$, 
    \begin{equation}
	e_L \le \left( A_2 + B_2 \ln L \right) L^{-2/m}; 
    \end{equation}%
\end{itemize}
where $A_0, A_1, A_2, B_2$ are some bounded positive values that only depend on the system parameters but not on $L$.  
\end{theorem}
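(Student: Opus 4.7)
The plan is to split the output space $\mathcal{Y}$ into a bounded ``bulk'' $\Ball_m(0,r)$ and an unbounded ``tail'' $\{y:\|y\|>r\}$, then use $L$ bins to cover the bulk and one extra bin for the tail, which produces the $(L+1)$-bin quantizer in the statement. For the bulk, I would invoke the standard covering number estimate $N(\Ball_m(0,r),\delta)\le(3r/\delta)^m$ and set $\delta=3r\,L^{-1/m}$, so that $L$ closed balls of radius $\delta$ cover $\Ball_m(0,r)$ and each resulting bin has diameter at most $6r\,L^{-1/m}$. This is exactly what produces the constant $36$ in the matrix bound.

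The crucial observation that makes the analysis clean is the identity $\Jm(Y;\thetav\mid\hat Y)=\E_{\hat Y}\bigl[\mathrm{Cov}\bigl(\nabla_{\thetav}\ln p_{\thetav}(Y)\mid\hat Y\bigr)\bigr]$, which I would derive by writing $\ln p_{\thetav}(y\mid\hat y)=\ln p_{\thetav}(y)-\ln p_{\thetav}(\hat y)$ and then differentiating under the integral to get $\nabla_{\thetav}\ln p_{\thetav}(\hat y)=\E[\nabla_{\thetav}\ln p_{\thetav}(Y)\mid\hat Y=\hat y]$. This recasts the entire problem as controlling the conditional variance of the score. For each bulk bin $\Vc_\ell$, the Lipschitz hypothesis \eqref{eq:lipschitz} yields $\|\nabla_{\thetav}\ln p_{\thetav}(y_1)-\nabla_{\thetav}\ln p_{\thetav}(y_2)\|\le\rho\|y_1-y_2\|\le 6\rho r L^{-1/m}$ for $y_1,y_2\in\Vc_\ell$, and since any random vector with $\|X-x_0\|\le a$ has covariance $\preceq a^2\,\Id$, the conditional covariance on that bin is $\preceq 36\rho^2 r^2 L^{-2/m}\,\Id$. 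For the tail bin, I bound the conditional covariance by $\E[\|\nabla_{\thetav}\ln p_{\thetav}(Y)\|^2\mid\|Y\|>r]\,\Id$ and apply H\"older with exponents $\tfrac{2+\epsilon}{2}$ and $\tfrac{2+\epsilon}{\epsilon}$ to obtain $\E\bigl[\|\nabla_{\thetav}\ln p_{\thetav}(Y)\|^2\,\ind(\|Y\|>r)\bigr]\le K_\epsilon\,P_{\thetav}(\|Y\|>r)^{\epsilon/(2+\epsilon)}$. Averaging over $\hat Y$ then yields the stated bound $\phi_L\,\Id$.

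The three scaling regimes then follow by optimizing $r$ in $\phi_L(r)=K_\epsilon\,P_{\thetav}(\|Y\|>r)^{\epsilon/(2+\epsilon)}+36\rho^2 r^2 L^{-2/m}$ and substituting into \eqref{eq:tmp234} provided by Lemma~\ref{lemma:phi_L}. For bounded support, pick $r=r_0$, killing the first term. For polynomial tails, balance the two terms by $r^2 L^{-2/m}\asymp r^{-\eta_1\epsilon/(2+\epsilon)}$, giving $r\asymp L^{2(2+\epsilon)/(m(\eta_1\epsilon+2(2+\epsilon)))}$ and, after back-substitution, the exponent $2\eta_1\epsilon/(m(\eta_1\epsilon+2(2+\epsilon)))$ claimed in the theorem. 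For Gaussian tails, take $r^2=c\ln L$ with $c\ge 2(2+\epsilon)/(m\eta_2\epsilon)$ so that the tail term is absorbed below $L^{-2/m}$, which generates the $\ln L$ factor. I expect the main obstacle to be the reformulation in the second paragraph: once one sees that the quantization loss in Fisher information is exactly a conditional variance of the score, the Lipschitz-plus-covering argument on the bulk and the H\"older bound on the tail both drop out naturally, and the tail optimizations reduce to routine calculus.
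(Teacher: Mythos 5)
Your proposal is correct and follows essentially the same route as the paper: split $\Yc$ into the ball $\Ball_m(0,r)$ and its complement, partition the ball into $L$ bins of diameter $6rL^{-1/m}$ (you via a covering-number estimate, the paper via a maximal $\Delta/2$-packing and its Voronoi cells — the same bound with the same constants), identify the conditional Fisher information as the conditional covariance of the score, bound the bulk bins by the Lipschitz constant $\rho$ and the tail bin by H\"older with exponents $\tfrac{2+\epsilon}{2}$ and $\tfrac{2+\epsilon}{\epsilon}$, then optimize over $r$ per tail regime. The minor detail you gloss over — turning the (possibly overlapping) covering balls into a genuine partition while retaining the diameter bound — is standard and does not affect the result.
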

}
\begin{proof}
  See Appendix~\ref{app:th2}. 
\end{proof}

\comment{
\begin{remark}
  Consider the polynomial tail decay scenario. The  result above reveals an interesting trade-off in the large $L$ regime between the three key aspects: storage complexity~($L\log \nr$~bits), computational complexity~$L$, and capacity loss~($\sim L^{-{2\over m} {\eta_1\epsilon\over 4+ 2\epsilon+\eta_1\epsilon}}$ bits). With Gaussian tail decay or bounded support, the scaling of the capacity loss improves to essentially $L^{-{2 \over m}}$ bits. 
\end{remark}
}

\revision{

\section{Extension to the Non-I.I.D. Setting}
\label{sec:noniid}
The main limitation of our results is the assumption that the output must be i.i.d.~given the input. In many practical large-scale MIMO systems, however, spatial correlation is inherent since the antennas may be close to each other.

To extend our results beyond the i.i.d.~setting, it would require a complete reworking of Clarke--Barron's proof techniques. More specifically, it mainly involves extending Laplace's method to a general $\nr$-letter setting and identifying verifiable sufficient conditions under which the divergence between the conditional output distribution and the marginal~(mixture) output distribution converges uniformly, in a similar fashion to Lemma~\ref{lemma:uniform} in the i.i.d.~case.  
Note that such uniform convergence is essential for establishing the asymptotic capacity~(maximum mutual information). 

A closer examination shows that, under suitable regularity conditions\footnote{We do not elaborate on these conditions here, as their full statement would take us too far afield and is out of the scope of the current work. Broadly speaking, they ensure uniform concentration of key quantities associated with the $\nr$-letter output distribution, such as the first and second derivatives of the log-likelihood.}, the asymptotic capacity remains unchanged when the Fisher information is replaced by the \emph{Fisher information rate}, whenever it exists:
\begin{equation}
  \Jm^{(\infty)}(\thetav) := \lim_{\nr\to\infty} \Jm^{(\nr)}(\thetav), 
\end{equation}
with
\begin{equation}
  \Jm^{(\nr)}(\thetav) := {1\over \nr} \E_{p_{\thetav}}\left[ \nabla_{\thetav} \ln p_{\thetav}(y^{\nr}) \left(\nabla_{\thetav} \ln p_{\thetav}(y^{\nr})\right)^\T \right], 
\end{equation}
where $y^{\nr} := \yv$ and $p_\thetav$ is the density of the random process $\{y_i\}_{i\ge1}$ and $p_{\thetav}(y^{\nr})$ is the corresponding marginal density on the first $\nr$ samples. 

\subsection{Example 1: Correlated CSI known at the receiver}
First, to connect the result to the practically relevant large-scale MIMO setting, let us consider the case where the channel matrix $\Hm \in \mathbb{R}^{\nt \times \nr}$ is known at the receiver, but exhibits spatial correlation across the receive antennas. As shown in the i.i.d.~case, we can let the effective output be $(y^{\nr},\Hm)$, where $\Hm$ is independent of $\thetav$ and, conditional on $\Hm$, the entries $y_i$'s are independent. It follows that   
\begin{align}
  \Jm^{(\nr)}(\thetav) &= {1\over \nr} \sum_{i=1}^{\nr} \E_{p_{\hv_i}}[\Jm(\thetav\mid \hv_i)], \label{eq:tmp1} 
\end{align}%
where $\hv_i$ denotes the channel coefficients of the $i$-th receive antenna and  
\begin{equation}
  \Jm(\thetav\mid \hv_i) :=  \E_{p_\thetav(\cdot|\hv_i)}\left[ \nabla_{\thetav} \ln p_{\thetav}(y_i \mid \hv_i) \left(\nabla_{\thetav} \ln p_{\thetav}(y_i \mid \hv_i )\right)^\T \right]. 
\end{equation}
When the $\hv_i$'s are correlated but have the same marginal distribution, the Fisher information rate exists and from \eqref{eq:tmp1} equals
\begin{equation}
  \Jm^{(\infty)}(\thetav) = \E_{p_{\hv}}[\Jm(\thetav\mid \hv)], 
\end{equation}
which is \emph{exactly} the same as in the i.i.d.~case. Therefore, perhaps surprisingly, the correlation does not affect the asymptotic capacity in the large receive antenna array regime, under the aforementioned conditions.  

To build intuition for this ``negative result'', let us recall the classical single-input multiple-output fading channel without peak-power constraint. We know that with normalized independent noise the exact ergodic capacity is 
\begin{align}
  C(P) &= {1\over 2}\E [\log(1+ P\|\hv\|^2)] \\
&= {1\over 2}\log \nr + \E\left[ \log\left({1\over \nr} + {P\over \nr}\|\hv\|^2\right) \right]
\end{align}%
with $\hv := [h_1, \ldots, h_{\nr}]$ a correlated channel vector. 
Now, if ${1\over \nr}\|\hv\|^2$ concentrates around its expectation and its lower tail is not too heavy, then the expectation converges when $\nr$ is large and  
\begin{equation}
  C(P) = { 1\over 2}\log \nr + \log \left( P\,\mathbb{E}\left[ h^2 \right] \right) + o(1)
\end{equation}
that depends only on the marginal distribution of $h_i$. This example aligns with our previous discussion and shows that in the large antenna array regime, the spatial correlation in receiver-known CSI does not change the asymptotic capacity. 

Nevertheless, the situation can be markedly different if the CSI is unknown at the receiver, or ${1\over \nr} \|\hv\|^2$ has a heavy lower tail near $0$, or the outputs are correlated given the CSI.

\subsection{Example 2: SIMO channel with correlated noise}
 
Let us now consider the SIMO case with correlated noise.  
\[
y_i = \theta + z_i,\qquad i=1,\dots,\nr,
\]
where the noise process $\{z_i\}_{i\ge1}$ is zero-mean, stationary, Gaussian with autocovariance function $\gamma(k), k\in\mathbb{Z}$, and power spectral density $N(f)$, $f\in[-\pi, \pi]$. 
It follows that the normalized Fisher information of the $\nr$-letter output distribution $p_\theta(y^{\nr})$ is
\begin{equation}
J^{(\nr)}(\theta) = {1\over \nr} \mathbf{1}_{\nr}^\top \pmb{\Sigma}_{\nr}^{-1}\,\mathbf{1}_{\nr},  \quad \theta \in \Theta, \label{eq:szego}
\end{equation}%
where $\pmb{\Sigma}_{\nr} := \bigl[ \gamma(j-k) \bigr]_{j,k\in[\nr]}$ is the covariance matrix of the noise vector $z^{\nr}$. 
Under mild regularity assumptions\footnote{The power spectral density should be strictly positive and continuous~\cite[Chap.~11]{Szego}.}, the right-hand side of \eqref{eq:szego} converges to $1\over {2\pi N(0)}$. Therefore, the \emph{Fisher information rate} exists:
\[
J^{(\infty)}(\theta)
:= \lim_{\nr\to\infty} \frac{1}{\nr}\,J^{(\nr)}(\theta)
= \frac{1}{2\pi N(0)}.
\]
Thus, in this correlated setting the Fisher information rate depends only on the low-frequency behavior of the noise through $N(0)$. Since $ 2\pi N(0) = \sum_{k=-\infty}^{\infty}\gamma(k)$, we see how the correlation structure influences the Fisher information rate. Interestingly, although the Jeffreys factor --- hence the asymptotic capacity --- depends on the correlation, the optimal Jeffreys prior~(i.e., the capacity-achieving input distribution) in this case remains unchanged. This is because the Fisher information rate is still constant in $\theta$, even in the presence of output correlation. 

Across both examples, we see that in order to apply our asymptotic results in the more general non-i.i.d.~setting, a practical approach is to work with the $\nr$-letter surrogate $\Jm^{(\nr)}(\thetav)$, which captures the correlation across receive antennas, rather than relying solely on the single-letter Fisher information $\Jm(\thetav)$.

}

\section{Conclusion}
\label{sec:conclusion}

We have presented a systematic application of Bayesian asymptotics to
analyze and design general large-scale MIMO channels, encompassing nonlinear
models, output impairments, and hardware constraints. In essence, the Fisher
information of the per-antenna output distribution emerges as the key of
asymptotic capacity, yielding a Jeffreys factor that both specifies the
optimal input distribution (the Jeffreys prior) and serves as a tractable
surrogate for the channel capacity. While the exact capacity is notoriously
difficult to compute for many nonlinear or large-dimensional channels, the
Jeffreys factor relies on the single-antenna output distribution and thus avoids the curse of dimensionality on the output side.

Although this work lays the foundation for a broad class of large-scale MIMO
problems, there are several natural extensions:
\begin{itemize}
  \item The proposed method readily extends to
  additional channel types, provided their single-letter distributions admit a
  smooth parameterization. These may include different fading models and noise
  statistics beyond those studied here.
  \item It would be interesting to extend the results to the multi-user setting, and identify the asymptotic capacity region. For instance, for multiple-access channels~(uplink), we would expect to apply the same approach to analyze the weighted sum capacity. 
\item \revision{ Our main results rely on the assumption of i.i.d.~outputs across antennas. 
Although we have outlined how the analysis may extend to the non-i.i.d.\ setting, a rigorous generalization of Clarke and Barron's result under verifiable and practically relevant conditions remains an important direction for future work.
 }
\end{itemize}

\appendix
\subsection{Regularity conditions for the parameterization}
\label{app:CB_cond}

We present the regularity conditions for the result in \cite{CB94} to hold. 
\begin{enumerate}
  \item[(i)] $\Theta$ is compact and contained in the interior of some region $\Omega$ with $\dim(\Omega) = \dim(\Theta)$.  
  \item[(ii)] The parameterization of the family $\{p_{\thetav}(y):\
    \thetav\in\Omega\}$ is one-to-one in $\Omega$.  
  \item[(iii)] The density $p_{\thetav}(y)$ is twice continuously
    differentiable in $\thetav$ for almost every $y$. For every
    $\thetav\in\Theta$, there is a
    $\delta = \delta(\thetav)$ so that for each $j,k\in[d]$
    \begin{equation}
      \E \left[ \sup_{\thetav'\in \Ball(\thetav,\delta)} \left| {\partial^2
      \over \partial\theta'_j \partial \theta'_k} \ln p_{\thetav'}(y)
      \right|^2 \right] \label{eq:tmp329} 
    \end{equation}%
    is finite and continuous as a function of $\thetav$ and for each
    $j\in[d]$  
    \begin{equation}
      \E  \left[ \left| {\partial
      \over \partial\theta_j } \ln p_{\thetav}(y)
      \right|^{2+\epsilon} \right]   \label{eq:tmp328}
    \end{equation}%
    is finite and continuous as a function of $\thetav$ for every
    $0\le\epsilon\le\epsilon_0$ for some $\epsilon_0>0$.  
\end{enumerate}

\begin{lemma}[Uniform convergence of the redundancy] \label{lemma:uniform}
  Let $\{p^{\nr}_\thetav, \thetav\in\Theta\}$, be a family of distributions, 
  $w(\thetav)$ be a positive and continuous prior over $\thetav\in
  \Theta$. Under the regularity conditions specified above, we have  
  \begin{align}
    \lim_{\nr\to\infty} \sup_{\thetav\in \Theta} \Biggl|
    &D(p^{\nr}_\thetav \| M_\nr^w)  \nonumber \\
    &-
    \left({d\over2}\log{\nr\over 2\pi e} + \log{\sqrt{\det(\Jm(\thetav))}
    \over w(\thetav)} \right) \Biggr| = 0, 
  \end{align}
  where $M_\nr^{w}(y^\nr) := \int_\Theta w(\thetav)
  \prod_{l=1}^\nr p_\thetav(y_l) d\thetav$ is the mixture distribution.
  \end{lemma}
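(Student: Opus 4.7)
The plan is to establish the claim via a Laplace-type approximation of the mixture $M_\nr^w(y^\nr)$, combined with the large-sample behavior of the maximum-likelihood estimator (MLE) $\hat{\thetav}=\hat{\thetav}(Y^\nr)$. Writing
\begin{equation}
D(p^\nr_\thetav \| M_\nr^w) = \E_\thetav\!\left[\ln p^\nr_\thetav(Y^\nr)\right] - \E_\thetav\!\left[\ln M_\nr^w(Y^\nr)\right],
\end{equation}
the task reduces to evaluating the second expectation asymptotically, uniformly in $\thetav\in\Theta$.

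The core step is to apply Laplace's method to $M_\nr^w(Y^\nr) = \int_\Theta w(\bm{\vartheta})\exp\bigl(\sum_{l=1}^\nr \ln p_{\bm{\vartheta}}(Y_l)\bigr) d\bm{\vartheta}$. Under $p^\nr_\thetav$, consistency of the MLE (implied by (i)--(iii)) gives $\hat\thetav\to\thetav$, and a second-order Taylor expansion of the log-likelihood around $\hat\thetav$, with remainder controlled by the $L^2$ bound \eqref{eq:tmp329}, yields
\begin{equation}
\sum_{l=1}^\nr \ln p_{\bm{\vartheta}}(Y_l) = \sum_{l=1}^\nr \ln p_{\hat\thetav}(Y_l) - \tfrac{\nr}{2}(\bm{\vartheta}-\hat\thetav)^\T \Jm(\thetav)(\bm{\vartheta}-\hat\thetav) + R_\nr(\bm{\vartheta}).
\end{equation}
I would split the $\bm{\vartheta}$-integral into a shrinking ball around $\hat\thetav$ and its complement, using the KL separation $\inf_{\|\bm{\vartheta}-\thetav\|>\delta} D(p_\thetav\|p_{\bm{\vartheta}}) > 0$ (from identifiability (ii) and compactness (i)) to bound the remote contribution as exponentially small in $\nr$, then perform a Gaussian integration on the local part to obtain
\begin{equation}
M_\nr^w(Y^\nr) = p^\nr_{\hat\thetav}(Y^\nr)\cdot w(\thetav)\cdot \frac{(2\pi)^{d/2}}{\nr^{d/2}\sqrt{\det\Jm(\thetav)}}\,(1+o(1)),
\end{equation}
where continuity of $w$ and $\Jm$ on compact $\Theta$ absorbs the substitution of $\hat\thetav$ by $\thetav$ in the prefactor.

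Taking logarithms and expectations under $p^\nr_\thetav$, and invoking Wilks' identity $\E_\thetav\bigl[\ln p^\nr_{\hat\thetav}(Y^\nr) - \ln p^\nr_\thetav(Y^\nr)\bigr] = d/2 + o(1)$ (which follows from the asymptotic $\chi^2_d$ distribution of the log-likelihood ratio under the same regularity conditions), we arrive at
\begin{align}
\E_\thetav[\ln M_\nr^w(Y^\nr)] &= \E_\thetav[\ln p^\nr_\thetav(Y^\nr)] + \tfrac{d}{2} + \ln w(\thetav) \notag \\
&\quad - \tfrac{d}{2}\ln\tfrac{\nr}{2\pi} - \tfrac{1}{2}\ln\det\Jm(\thetav) + o(1),
\end{align}
which rearranges to the claimed asymptotic expression.

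The main obstacle will be upgrading every $o(1)$ to a uniform-in-$\thetav$ statement. This requires three ingredients: a uniform large-deviation bound showing $\mathbb{P}_\thetav(\|\hat\thetav-\thetav\|>\delta) \le \varepsilon_\nr$ with $\varepsilon_\nr\to 0$ independent of $\thetav$, obtained via a finite covering of compact $\Theta$ together with Chebyshev-type tail bounds stemming from the $(2+\epsilon)$-moment condition \eqref{eq:tmp328}; conversion of pointwise-in-probability remainders into uniform bounds on expectations, again using the uniform integrability supplied by \eqref{eq:tmp328}; and uniform continuity of $w$, $\Jm$, and $\ln\det\Jm$ on $\Theta$, which is immediate from compactness and the smoothness assumptions. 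These are precisely the ingredients assembled in Clarke--Barron \cite{CB94}, so once their regularity hypotheses are verified in our setting the result follows from their Theorem~2.1.
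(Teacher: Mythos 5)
Your proposal is correct and takes essentially the same route as the paper: the paper's proof of this lemma is simply a citation to Clarke and Barron \cite{CB94}, and your sketch reconstructs the Laplace-approximation/Wilks argument underlying their theorem before likewise deferring to their Theorem 2.1 for the uniformity. The sketch is a faithful and useful outline of the Clarke--Barron program (local Gaussian integration around the MLE, exponential decay of the remote contribution via identifiability and compactness, and uniform integrability from the $(2+\epsilon)$-moment condition to upgrade the in-probability estimates to uniform bounds on expectations).
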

\begin{proof}
  See \cite{CB94}.
\end{proof}
From the uniform convergence, we have the following convergence of the expectation. 
\begin{corollary}
  \label{coro:uniform}
When $\thetav\sim w$ and $\yv\sim p^{\nr}_\thetav$, we have 
  \begin{equation}
    I(\thetav; \yv) = {d\over2}\log{\nr\over 2\pi e} + \E_{\thetav\sim w} \left[ \log{\sqrt{\det(\Jm(\thetav))} \over w(\thetav)} \right] + o(1),   
  \end{equation}%
  when $\nr\to \infty$. 
\end{corollary}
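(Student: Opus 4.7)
The plan is to derive the corollary as an immediate integration of the uniform convergence statement in Lemma~\ref{lemma:uniform} against the prior $w$, using the standard identity relating mutual information to an averaged KL divergence.

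First, I would invoke the elementary identity
\begin{equation}
I(\thetav;\yv) \;=\; \E_{\thetav\sim w}\bigl[\,D(p^{\nr}_{\thetav}\,\|\,M^w_\nr)\,\bigr],
\end{equation}
which holds because, by construction, $M^w_\nr$ is the marginal distribution of $\yv$ under the joint law $w(\thetav)\,p^{\nr}_{\thetav}(\yv)$. This identity is the usual ``Bayesian redundancy equals mutual information'' relation and requires no additional hypothesis beyond the absolute continuity already embedded in the model.

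Second, I would use Lemma~\ref{lemma:uniform} directly. Setting
\begin{equation}
\varepsilon_\nr \;:=\; \sup_{\thetav\in\Theta}\left|\,D(p^{\nr}_\thetav\|M^w_\nr) - \frac{d}{2}\log\frac{\nr}{2\pi e} - \log\frac{\sqrt{\det\Jm(\thetav)}}{w(\thetav)}\,\right|,
\end{equation}
Lemma~\ref{lemma:uniform} asserts $\varepsilon_\nr \to 0$. Since the bracketed deviation is uniformly bounded by $\varepsilon_\nr$ on $\Theta$, integrating against the probability measure $w(\thetav)\,d\thetav$ preserves the bound and gives
\begin{equation}
\left|\,I(\thetav;\yv) - \frac{d}{2}\log\frac{\nr}{2\pi e} - \E_{\thetav\sim w}\!\left[\log\frac{\sqrt{\det\Jm(\thetav)}}{w(\thetav)}\right]\,\right| \;\le\; \varepsilon_\nr \;=\; o(1),
\end{equation}
which is exactly the claimed expansion.

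The only technical point to verify is that the limiting expectation is finite, so that the displayed equation is meaningful. This follows directly from the regularity conditions of Appendix~\ref{app:CB_cond}: $\Theta$ is compact, $w$ is positive and continuous on $\Theta$ (hence bounded above and bounded away from zero), and $\det\Jm(\thetav)$ is continuous in $\thetav$ by condition~(iii) and strictly positive on $\Theta$; therefore $\log(\sqrt{\det\Jm(\thetav)}/w(\thetav))$ is continuous on a compact set and in particular bounded. There is no genuine obstacle here — the entire analytic burden (the local asymptotic normality argument, the Laplace approximation of the mixture $M^w_\nr$, and the uniform control of remainders) is absorbed into Lemma~\ref{lemma:uniform}, which is quoted from \cite{CB94}. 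The corollary is thus a two-line consequence: write $I(\thetav;\yv)$ as an average KL divergence, then average the uniform expansion.
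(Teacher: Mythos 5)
Your proof is correct and follows essentially the same route as the paper: the paper simply notes that the corollary follows from the uniform convergence in Lemma~\ref{lemma:uniform}, implicitly using the identity $I(\thetav;\yv)=\E_{\thetav\sim w}\bigl[D(p^{\nr}_\thetav\|M^w_\nr)\bigr]$ and integrating the uniform error bound against $w$, exactly as you do. Your additional remark on the finiteness of the limiting expectation via compactness of $\Theta$ and continuity of $w$ and $\det\Jm(\thetav)$ is a harmless (and correct) elaboration.
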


\subsection{Proof of Theorem~\ref{thm:main}} 
\label{app:th1}

\begin{lemma}[Mixture distributions minimize Bayesian redundancy]
  \label{lemma:mixture}
  The mixture distribution
  $m^{w}:= \E_{\thetav\sim w} \left[ p_\thetav \right]$ is the unique distribution that minimizes the Bayesian
  redundancy $\E_{\thetav\sim w} \left[ D( p_\thetav \| R) \right]$ over all distributions $R$, i.e.,
  \begin{equation}
    \inf_{R} \E_{\thetav\sim w } \left[ D(p_\thetav \| R) \right] = \E_{\thetav\sim w }
    \left[ D(p_\thetav \| m^w) \right] = I(\thetav; Y),
  \end{equation}
  where $\thetav\sim w$ and $Y\sim p_\thetav$ conditional on $\thetav$. 
\end{lemma}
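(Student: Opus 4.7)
The plan is to prove this via the standard \emph{compensation identity} (sometimes called the Pythagorean identity for KL divergence), which reduces the claim to non-negativity of KL divergence.

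First I would expand, for any candidate distribution $R$ (absolutely continuous w.r.t.\ the dominating measure of $m^w$, else the redundancy is trivially infinite),
\begin{align}
\E_{\thetav\sim w} D(p_\thetav \| R)
&= \E_{\thetav\sim w} \E_{Y\sim p_\thetav}\bigl[ \ln p_\thetav(Y) - \ln R(Y) \bigr] \nonumber \\
&= \E_{\thetav\sim w} \E_{Y\sim p_\thetav}\bigl[ \ln p_\thetav(Y) - \ln m^w(Y) \bigr] \nonumber \\
&\quad + \E_{\thetav\sim w} \E_{Y\sim p_\thetav}\bigl[ \ln m^w(Y) - \ln R(Y) \bigr].
\end{align}
The first piece is by definition $\E_{\thetav\sim w} D(p_\thetav \| m^w)$. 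For the second, I would swap the expectations (Fubini is fine since the integrand equals $\ln(m^w/R)$ and the outer integral is over the joint distribution of $(\thetav,Y)$), and use $\E_{\thetav\sim w}[p_\thetav(\cdot)] = m^w(\cdot)$ to rewrite it as $\E_{Y\sim m^w}[\ln m^w(Y) - \ln R(Y)] = D(m^w \| R)$. This yields the key identity
\begin{equation}
\E_{\thetav\sim w} D(p_\thetav \| R) \;=\; \E_{\thetav\sim w} D(p_\thetav \| m^w) \;+\; D(m^w \| R).
\end{equation}

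From this identity both the infimum and its uniqueness follow at once: since $D(m^w \| R) \ge 0$ with equality iff $R = m^w$ (up to a set of measure zero under $m^w$), the minimum of the Bayesian redundancy is attained, uniquely, at $R = m^w$. To identify the minimum value with $I(\thetav;Y)$, I would invoke the definition of mutual information in terms of the joint density against the product of marginals: writing $p(\thetav,Y) = w(\thetav)\,p_\thetav(Y)$ with $Y$-marginal $m^w(Y)$, one has
\begin{equation}
I(\thetav;Y) \;=\; \E\!\left[\ln \frac{p_\thetav(Y)}{m^w(Y)}\right] \;=\; \E_{\thetav\sim w} D(p_\thetav \| m^w),
\end{equation}
which closes the chain of equalities.

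There is no real obstacle: the compensation identity is a one-line computation, and the uniqueness clause is just strict positivity of KL for distinct distributions. The only subtlety worth flagging in the write-up is to state the result for distributions $R$ that are absolutely continuous with respect to the dominating measure $\nu$ on $\Yc^{\nr}$ (the same measure used to define $p_\thetav$), so that densities and logarithms are meaningful; for any $R$ failing this, $\E_{\thetav\sim w}D(p_\thetav\|R) = +\infty$ and the claim is vacuous.
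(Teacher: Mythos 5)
Your proposal is correct and is essentially the same argument as the paper's: both proofs derive the compensation identity $\E_{\thetav\sim w} D(p_\thetav\|R) = \E_{\thetav\sim w} D(p_\thetav\|m^w) + D(m^w\|R)$ by splitting off the $\ln(m^w/R)$ term, swapping the order of expectation, and then invoking non-negativity of KL with equality iff $R=m^w$. The added remark on absolute continuity of $R$ is a reasonable clarification but does not change the substance.
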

\begin{proof}
  Note that $D(p_\thetav \| R) = D(p_\thetav \| m^w) + \E_{p_\thetav} \left[ \log
  {m^w(Y) \over R(Y)} \right] $. Taking expectation over $\thetav$, we have 
  \begin{align}
    \lefteqn{\E_{\thetav\sim w} \left[ D(p_\thetav \| R) \right] } \qquad \nonumber \\
    &= \E_{\thetav\sim w} \left[ D(p_\thetav \| m^w) \right] + \E_{\thetav\sim
    w}\E_{p_\thetav} \left[ \log {m^w(Y) \over R(Y)} \right] \\
    &= \E_{\thetav\sim w} \left[ D(p_\thetav \| m^w) \right] + D(m^w \| R) \\
    &\ge \E_{\thetav\sim w} \left[ D(p_\thetav \| m^w) \right], 
  \end{align}%
  with equality if and only if $R = m^w$. The minimum redundancy
  coincides with the mutual information $I(\thetav; Y)$.  
\end{proof}
Recall that the capacity under the average-power constraint
\begin{equation}
  C(P) = \max_{w:\,\thetav\sim w, \E[c(\thetav)]\le P} I(\thetav; \yv). 
\end{equation}%
\comment{Since the constraint is convex, one can verify\footnote{Slater's
condition is verified.} that strong duality
holds, i.e., the maximization is equivalent to 
\begin{equation}
  C(P) = \min_{\lambda\ge0} \max_{w:\,{\thetav}\sim w} \left\{I(\thetav; \yv) -
  \lambda\, \E[\tilde{c}(\thetav)] \right\}, \label{eq:duality}
\end{equation}%
where $\tilde{c}(\thetav):=c(\thetav) - P$. 
For any prior distribution $w$, we have the following upper bound
\begin{align}
  I(\thetav; \yv) - \lambda\, \E[\tilde{c}(\thetav)] &= \inf_{R} \E_{\thetav\sim
  w} \left[ D(p_{\thetav}^{\nr}\|
  R) - \lambda\, \tilde{c}(\thetav) \right]  \\
  &\le \inf_{R} \sup_{\thetav} \Bigl( D(p_{\thetav}^{\nr} \| R) -
  \lambda\, \tilde{c}(\thetav) \Bigr) \\
  &\le \sup_{\thetav} \Bigl( D(p_{\thetav}^{\nr} \| m_\nr^{J,\lambda}) -
  \lambda\, \tilde{c}(\thetav) \Bigr), \label{eq:tmp199}
\end{align}%
where the first equality is from Lemma~\ref{lemma:mixture}; in the last
inequality, we set $R = m_\nr^{J,\lambda}$, i.e., a mixture of
$p_{\thetav}^{\nr}$ according to the following tilted Jeffreys prior
\begin{equation}
  w_{J,\lambda}(\thetav) := {2^{-\lambda\,\tilde{c}(\thetav)}
  \sqrt{\det(\Jm(\thetav))} \over \int_{\Theta} 2^{-\lambda\,\tilde{c}(\thetav)}
  \sqrt{\det(\Jm(\thetav))} d \thetav }, \quad \thetav \in \Theta.
  \label{eq:wJ} 
\end{equation}%
Applying Lemma~\ref{lemma:uniform} with the above prior to the upper
bound \eqref{eq:tmp199}, we obtain 
\begin{align}
  I(\thetav; \yv) - \lambda\, \E[\tilde{c}(\thetav)] &\le {d\over2}\log{\nr\over 2\pi e} \nonumber \\ 
  &\phantom{=} \ + \log  \int_{\Theta}
2^{-\lambda\,\tilde{c}(\thetav)}
\sqrt{\det(\Jm(\thetav))} d \thetav  + o(1), \label{eq:tmp111}
\end{align}
where we see that the right hand side does not depend on $w$ and is
therefore a capacity upper bound for any
$\lambda\ge0$. Let us use $\lambda^*$ that is the minimum $\lambda$ such
that $\E_{\thetav\sim w_{J,\lambda}} [\tilde{c}(\thetav)] \le 0$.  
From \eqref{eq:duality} and \eqref{eq:tmp111}, we have the following capacity upper bound. 
\begin{align}
  C(P) \le \bar{C}(P) &:= {d\over2}\log{\nr\over 2\pi e}  \nonumber \\ &\phantom{:=}\ + \log  \int_{\Theta}
2^{-\lambda^*\,\tilde{c}(\thetav)} \sqrt{\det(\Jm(\thetav))} d \thetav  + o(1). 
\end{align}%

To obtain the lower bound, let us fix a prior $w$ that is continuous and positive over $\Theta$, we have 
\begin{align}
    I(\thetav; \yv) &= {d\over2}\log{\nr\over 2\pi e} + \E_{\thetav\sim w} \left[ \log{\sqrt{\det(\Jm(\thetav))} \over w(\thetav)} \right] + o(1) \\
    &= {d\over2}\log{\nr\over 2\pi e} - D(w\| w_{J,\lambda^*}) \nonumber \\ &\phantom{=}\ + \E_{\thetav\sim w} \left[ \log{\sqrt{\det(\Jm(\thetav))} \over w_{J,\lambda^*}(\thetav)} \right] + o(1) \\
    &= \bar{C}(P) - D(w\| w_{J,\lambda^*}) + \lambda^* \E_{\thetav\sim w} [\tilde{c}(\thetav)] + o(1),
\end{align}%
which proves \eqref{eq:C(w)}. 
Choosing $w=w_{J,\lambda^*}$, we have $D(w\|w_{J,\lambda^*}) = 0$ and $\lambda^*\E_w[\tilde c(\thetav)] = 0$ where the latter is from 
$\lambda^*\E_{w_{J,\lambda^*}}[\tilde c(\thetav)]=0$ by the definition of $\lambda^*$. 
We have 
\begin{align}
   C(P) &\ge I(\thetav; \yv) 
   = \bar{C}(P) + o(1).
\end{align}%

Since the upper and lower bounds coincide with $\bar{C}(P)$ up to a vanishing term with
$\nr$, we have the asymptotic capacity \eqref{eq:capa1}. And the optimal
input distribution is $p^* = w_{J,\lambda^*}$ which remains the same
when we replace $\tilde{c}(\thetav)$ by $c(\thetav)$ in \eqref{eq:wJ}. 
}

\qed

\subsection{Proof of Proposition~\ref{prop:1}}
\label{app:prop1}

  When the receiver computes the log-likelihood with the approximation~\eqref{eq:llh_approx}, the channel becomes effectively a channel with quantized output, for which the asymptotically optimal input is $w_{J,\lambda^*_L}$. The asymptotic capacity of this channel with quantized output is 
  \begin{equation}
    C_L(P) = {d\over2} \log {\nr\over 2\pi e} + \log(\JF_L(\lambda^*_L)),  \label{eq:tmp986}
  \end{equation}%
with a vanishing term when $\nr$ is large; $\JF_L(\lambda_L)$ is the Jeffreys factor of the channel with quantized output. Note that the transmitter is not aware of the approximation, and applies the input distribution $w_{J,\lambda^*}$, the asymptotically optimal input for the original channel. The asymptotic achievable rate becomes, according to \eqref{eq:C(w)}, 
  \begin{align}
    C_L(P)
- D(w_{J,\lambda^*}\|w_{J,\lambda_L^*})
+ \lambda_L^*\, \mathbb{E}_{w_{J,\lambda^*}}\!\big[c(\thetav)-P\big].
    \label{eq:tmp887}
  \end{align}%
Then, we compute 
  \begin{align}
    \lefteqn{D(w_{J,\lambda^*} \| w_{J,\lambda^*_L}) } \nonumber\\
    &= \E_{\thetav\sim w_{J,\lambda^*}} \left[\log\left(  
    { {2^{-\lambda^*(c(\thetav) - P)}\over \JF(\lambda^*)} \sqrt{\det(\Jm(\thetav))} 
   \over {2^{-\lambda_L^*(c(\thetav) - P)}\over \JF_L(\lambda_L^*)} \sqrt{\det(\Jm_L(\thetav))} 
    }\right) \right] \\
    &= \log (\JF_L(\lambda_L^*)) - \log (\JF(\lambda^*)) \nonumber \\
    &\phantom{=}\ + (\lambda^*_L - \lambda^*) \E_{\thetav\sim w_{J,\lambda^*}}\left[ c(\thetav) - P \right] \nonumber \\
    &\phantom{=}\ + \E_{\thetav\sim w_{J,\lambda^*}} \left[\log {  \sqrt{\det(\Jm(\thetav))} 
    \over \sqrt{\det(\Jm_L(\thetav))}} \right], \label{eq:tmp989} 
  \end{align}%
  where the last expectation can be explicitly written as 
  \begin{multline}
\int_{\Theta} 
2^{-\lambda^*(c(\thetav)-P)}
\sqrt{\det \Jm(\thetav)}
\log\!\frac{\sqrt{\det \Jm(\thetav)}}
{\sqrt{\det \Jm_L(\thetav)}}
\, d\thetav
\\
\le
\kappa_0
\int_{\Theta}
\ln\!\frac{\det \Jm(\thetav)}
{\det \Jm_L(\thetav)}
\, d\thetav,
\label{eq:tmp988}
\end{multline}
  \newrevision{
  with 
  \begin{equation}
    \kappa_0 := \sup_{\thetav\in\Theta} \left\{ 2^{-\lambda^* (c(\thetav) - P)} \sqrt{\det(\Jm(\thetav))} \log e \right\}
  \end{equation}%
  being a bounded constant since the function inside the supremum is continuous and the set $\Theta$ is compact.
  }
  Plugging \eqref{eq:tmp986}, \eqref{eq:tmp989}, and \eqref{eq:tmp988} back in \eqref{eq:tmp887}, we obtain an achievable rate $C(P) - \kappa_0 e_L$ and complete the proof of the theorem. 

  \qed

\subsection{Proof of Theorem~\ref{thm:cond_fish}}
\label{app:th2}

\comment{
Fix $r>0$ and $L\in\mathbb{N}$. First, define $\Vc_{0} := \{y\in\Yc:\; \|y\| > r\}$. 
Then, we apply the following lemma to partition the remaining space into $L$ bins with diameter upper bounded by $\Delta := 6 r L^{-1/m}$. 
\begin{lemma}
  An $m$-ball of radius $r$ $\Bc_m(0,r)$ can be partitioned into $L$ bins $\Vc_\ell$, $\ell \in [L]$, such that each bin has diameter at most $\Delta := 6rL^{-1/m}$.
\end{lemma}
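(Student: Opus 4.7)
The plan is to reduce the partition question to a standard Euclidean covering estimate and then convert a covering into a partition via a Voronoi assignment. First, I would invoke the classical volumetric covering bound: for any $\rho > 0$, the ball $\Bc_m(0,r)$ can be covered by at most $(1 + 2r/\rho)^m$ Euclidean balls of radius $\rho$. This follows from a maximal-packing argument---any maximal set of points in $\Bc_m(0,r)$ with pairwise distances exceeding $\rho$ yields open $\rho/2$-balls that are disjoint and contained in $\Bc_m(0, r + \rho/2)$, so a volume comparison bounds the number of centers; by maximality, every point of $\Bc_m(0,r)$ lies within $\rho$ of one of those centers.

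Second, I would set $\rho := 3 r L^{-1/m}$. For $L \ge 3^m$ one has $\rho \le r$, and then $(1 + 2r/\rho)^m \le (3r/\rho)^m = L$, so the previous step produces centers $c_1, \ldots, c_N$ with $N \le L$ whose $\rho$-balls cover $\Bc_m(0,r)$. For $L < 3^m$ the target diameter $6 r L^{-1/m}$ already exceeds $2r$, so the whole ball trivially fits into a single bin and the remaining $L-1$ bins can be declared empty; this disposes of the small-$L$ regime.

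Third, to upgrade the covering into a partition I would use Voronoi cells with respect to $c_1, \ldots, c_N$:
\[
  \Vc_\ell := \bigl\{\, y \in \Bc_m(0,r) \,:\, \ell = \arg\min_{i \in [N]} \|y - c_i\|\,\bigr\},
\]
with ties broken in any fixed way. For $y \in \Vc_\ell$ one has $\|y - c_\ell\| \le \|y - c_i\|$ for every $i$; since the centers form a covering, there exists some $i$ with $\|y - c_i\| \le \rho$, which forces $\Vc_\ell \subseteq \Bc_m(c_\ell, \rho)$ and therefore $\mathrm{diam}(\Vc_\ell) \le 2\rho = 6 r L^{-1/m}$. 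Padding with empty bins brings the count to exactly $L$.

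The main obstacle is keeping the constant clean: a naive axis-aligned cube slicing of $[-r,r]^m$ produces cells of diameter $\sqrt{m}\,(2r/k)$, so matching $6 r L^{-1/m}$ that way would demand roughly $(\sqrt{m}/3)^m L$ cubes---too many once $m$ is large. Routing the argument through the covering bound $(1 + 2r/\rho)^m$, whose exponential base is dimension-free in $\rho/r$, is what removes the spurious $\sqrt{m}$ factor and yields the advertised constant $6$.
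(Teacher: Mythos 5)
Your proof is correct and follows essentially the same route as the paper's: take a maximal $\rho$-separated set (with $\rho = \Delta/2 = 3rL^{-1/m}$), use the volume-comparison bound $(1 + 2r/\rho)^m$ on the packing number, observe that maximality turns the packing into a $\rho$-covering, and pass to Voronoi cells to get a genuine partition with cell diameter at most $2\rho$. Your explicit treatment of the small-$L$ regime ($L < 3^m$) matches the paper's remark that $L'=1$ when $\Delta > 2r$, and padding with empty bins is an acceptable way to reach exactly $L$ bins.
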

\begin{proof}
  Consider a maximum $\tfrac{\Delta}{2}$-packing of $\Bc_m(0,r)$, $v_1$, \ldots, $v_{L'}$, where $L'$ is the $\tfrac{\Delta}{2}$-packing number. That is,
  \begin{align}
    \min_{i\ne j \in [L']} \|v_i - v_j\| &\ge \Delta/2, \\
    \max_{y\in\Bc_m(0,r)} \min_{i \in [L']} \|y - v_i\| &< \Delta/2. 
  \end{align}%
  It is without loss of generality to consider $2r \ge \Delta$ for otherwise $L'=1$. From the first condition, we have 
  \begin{equation}
    \Vol(\Bc_m(0, r+\Delta/4)) \ge L'\, \Vol(\Bc_m(0,\Delta/4))
  \end{equation}%
  hence
  \begin{equation}
    {L'}^{1\over m} \le 4r/\Delta + 1 \le 6r/\Delta. 
  \end{equation}%
  From the second condition, we see that $v_\ell$, $\ell\in[L']$ is also a $\tfrac{\Delta}{2}$-covering, so that the Voronoi regions of $v_\ell$ have each diameter upper bounded by $\Delta$. Since $L'\le L$, one can always split the regions into $L$ bins without increasing the diameter upper bound. 
\end{proof}
}

The quantization function will therefore be $\hat{y} := \sum_{\ell=0}^{L} \ell\cdot \ind( y \in \Vc_\ell)$. 
Note that the conditional density is given by
\[
p_\thetav(y\mid \hat{Y}= \ell) = \frac{p_\thetav(y)}{\int_{\Vc_\ell} p_\thetav(u)\,du}, \quad y\in \Vc_\ell.
\]
Taking logarithms and differentiating with respect to $\thetav$, 
\begin{align}
  \lefteqn{\nabla_\thetav \ln p_\thetav(y\mid \hat{Y}=\ell)} \qquad \\ 
  &= \nabla_\thetav \ln p_\thetav(y) - \frac{\int_{\Vc_\ell} \nabla_\thetav p_\thetav(u)\,du}{\int_{\Vc_\ell} p_\thetav(u)\,du} \\
&= \nabla_\thetav \ln p_\thetav(y) - \E_{y'} \left[ \nabla_\thetav \ln p_\thetav(y') \mid \hat{Y} = \ell \right]. 
\end{align}%
Thus, for any $y\in {\Vc}_\ell$, we have
\begin{align}
  \lefteqn{ \|\nabla_\thetav \ln p_\thetav(y\mid \hat{Y}=\ell) \|}\qquad  \\
   &= \left\|\E_{y'} \left[\nabla_\thetav \ln p_\thetav(y) -  \nabla_\thetav \ln p_\thetav(y') \mid \hat{Y} = \ell \right] \right\| \label{eq:tmp544}\\
  &\le \E_{y'} \left[\left\| \nabla_\thetav \ln p_\thetav(y) -  \nabla_\thetav \ln p_\thetav(y') \right\| \mid \hat{Y} = \ell \right] \label{eq:tmp545}
\end{align}%
using the triangle inequality $\|\E[\xv]\| \le \E[\|\xv\|]$. 
By the definition of the conditional Fisher information, we have
\begin{multline}
\Jm(Y;\thetav \mid \hat{Y}=\ell) \\= \mathbb{E}\Bigl[\nabla_\thetav \ln p_\thetav(Y\mid \hat{Y}=\ell)\,\nabla_\thetav \ln p_\thetav(Y\mid \hat{Y}=\ell)^\T \,\Big|\, \hat{Y}=\ell\Bigr].
  \label{eq:cond_fish}
\end{multline}

\subsubsection*{The case $\ell\in[L]$}
Each bin has maximum diameter $\Delta$, implying  
\begin{equation}
  \left\| \nabla_\thetav \ln p_\thetav(y) -  \nabla_\thetav \ln p_\thetav(y') \right\| \le \rho\, \Delta, \quad y, y'\in \Vc_\ell. 
\end{equation}%
Therefore, from \eqref{eq:tmp545}, we have
 $\| \nabla_\thetav \ln p_\thetav(Y\mid \hat{Y}=\ell) \| \le \rho\,\Delta$ for all $Y\in B_{\ell}$. It follows that
\[
\nabla_\thetav \ln p_\thetav(Y\mid \hat{Y}=\ell)\,\nabla_\thetav \ln p_\thetav(Y\mid \hat{Y}=\ell)^\T \preceq  \rho^2\,\Delta^2\,\Id.
\]
Taking the expectation with respect to the conditional law, we have 
\[
\Jm(Y;\thetav \mid \hat{Y}=\ell) \preceq \rho^2\,\Delta^2\,\Id, \quad \ell\in[L]. 
\]

\subsubsection*{The case $\ell=0$}
Now consider the quantization bin $\Vc_{0}$. From \eqref{eq:cond_fish} and \eqref{eq:tmp544}, we have   
\begin{align}
	\Jm(Y;\thetav \mid \hat{Y}=\ell) &= \mathsf{Cov}\left[\nabla_\thetav \ln p_\thetav(Y) \,\Big|\, \hat{Y}=\ell\right]  \label{eq:cond_fish_2} \\
					    &\preceq \E \left[\nabla_\thetav \ln p_\thetav(Y) \nabla_\thetav \ln p_\thetav(Y)^\T \,\Big|\, \hat{Y}=\ell\right] \\
					    &\preceq \E \left[ \|\nabla_\thetav \ln p_\thetav(Y) \|^2  \,\Big|\, \hat{Y}=\ell\right] \Id. 
\end{align}%
Moreover,  
\begin{align}
  \lefteqn{\E \left[ \|\nabla_\thetav \ln p_\thetav(Y) \|^2  \,\Big|\, \hat{Y}= \ell\right] \mathbb{P}(\hat{Y} = \ell)}\nonumber \\ 
  &= \E \left[ \|\nabla_\thetav \ln p_\thetav(Y) \|^2\,  \ind(\hat{Y}=\ell) \right] \\ 
													&\le \E \left[ \|\nabla_\thetav \ln p_\thetav(Y) \|^{2+\epsilon} \right]^{2\over 2+\epsilon} \mathbb{P}(\hat{Y} = \ell)^{\epsilon \over 2+\epsilon}, \\
                                                                                                        &\le  K_{\epsilon} P_{\thetav}(\|Y\| > r)^{\epsilon\over 2+\epsilon} 
\end{align}%
using H\"older's inequality $\E[|AB|] \le \E[|A|^p]^{1\over p}\E[|B|^q]^{1\over q}$ for any $p,q\ge0$ such that ${1\over p} + {1\over q} = 1$. Therefore, we have  
\begin{align}
	\Jm(Y;\thetav \mid \hat{Y}) &\preceq  \left( \rho^2\,\Delta^2 + \kappa_{\epsilon} P_{\thetav}(\|Y\| > r)^{\epsilon\over 2+\epsilon} \right) \Id \\
  &=  \left( 36 \rho^2\, r^2 L^{-2/m} + K_{\epsilon} P_{\thetav}(\|Y\| > r)^{\epsilon\over 2+\epsilon} \right) \Id
\end{align}%
as claimed.  

Next, we consider different cases of the decaying of the tail probability $P_{\thetav}(\|Y\| > r)$. 
If $\sup_{\thetav\in\Theta} P_{\thetav}(\|Y\| > r_0) = 0$, then setting $r=r_0$ we have $\Jm(Y;\thetav \mid \hat{Y}) \preceq 36 \rho^2\, r_0^2 L^{-2/m} \Id$. 
If $\sup_{\thetav\in\Theta} P_{\thetav}(\|Y\| > r ) \le \kappa_1 r^{-\eta_1}$ then 
\begin{align}
  \lefteqn{
  \min_{r \ge 0} \left( 36 \rho^2\, r^2 L^{-2/m}+ K_{\epsilon} P_{\thetav}(\|Y\| > r)^{\epsilon\over 2+\epsilon} \right)} \nonumber \\ 
  &\le  \min_{r \ge 0} \left( 36 \rho^2\, r^2 L^{-2/m}+ K_{\epsilon} \kappa_1^{\epsilon\over 2+\epsilon } r^{-{ \eta_1\epsilon\over 2+\epsilon }} \right)\\
														 &= 
36\rho^2\frac{\eta_1'}{\eta_1\epsilon}\,L^{-\frac{2\eta_1\epsilon}{m \eta_1'}}
\left(\frac{\eta_1\epsilon\,K_{\epsilon}\kappa_1^{\epsilon/(2+\epsilon)}}{72(2+\epsilon)\rho^2}\right)^{\frac{2(2+\epsilon)}{\eta_1'}},
\end{align}
where $\eta_1' := \eta_1\epsilon+2(2+\epsilon)$. 
If $\sup_{\thetav\in\Theta} P_{\thetav}(\|Y\| > r ) \le \kappa_2 e^{- \eta_2 r^2}$ then 
\begin{align*}
  \lefteqn{\min_{r \ge 0} \left( 36 \rho^2\, r^2 L^{-2/m}+ K_{\epsilon} P_{\thetav}(\|Y\| > r)^{\epsilon\over 2+\epsilon} \right) }\nonumber \\ &\le  \min_{r \ge 0} \left( 36 \rho^2\, r^2 L^{-2/m}+ K_{\epsilon} \kappa_2^{\epsilon\over 2+\epsilon } e^{-{\eta_2 r^2 \epsilon\over 2+\epsilon}} \right) \\
														 &= 36 \rho^2 L^{-2/m} \left(\frac{2 + \epsilon}{\eta_2 \epsilon}\right) \left( 1 + \ln\left( \frac{K_{\epsilon} \kappa_2^{\epsilon\over 2+\epsilon } \eta_2 \epsilon}{36 \rho^2 (2+\epsilon)} L^{2/m} \right) \right). 
\end{align*}

\comment{
From the above bounds, we can identify $\phi_{L+1}$ in the different considered cases. Plugging it into \eqref{eq:tmp234}, we prove the theorem since $\phi_{L+1}$ and $\phi_{L}$ have the same scaling. To see this, if $e_{L+1} \le (A_2 + B_2 \ln L) L^{-2/m}$ for every $L$, then $e_L \le (A_2 + B_2 \ln (L-1)) (L-1)^{-2/m} \le (A_2 + B_2 \ln L) (L/2)^{-2/m} = (2^{2/m} A_2 + 2^{2/m} B_2 \ln L) L$. This is essentially the same by replacing $A_2$ and $B_2$ by $2^{2/m} A_2 \le 4A_2$ and $2^{2/m} B_2 \le 4B_2$, respectively. 
}

\bibliographystyle{IEEEtran}
\bibliography{IEEEabrv,./biblio}

@string{jsac = "IEEE J. Select. Areas Commun."}

@article{liang2016mixed,
  title={{Mixed-ADC massive MIMO}},
  author={Liang, Ning and Zhang, Wenyi},
  journal=jsac,
  volume={34},
  number={4},
  pages={983--997},
  year={2016},
  publisher={IEEE}
}

@article{wang2022generalized,
  title={Generalized nearest neighbor decoding},
  author={Wang, Yizhu and Zhang, Wenyi},
  journal={IEEE Trans. Inf. Theory},
  volume={68},
  number={9},
  pages={5852--5865},
  year={2022},
  publisher={IEEE}
}

@article{zamir1998proof,
  title={A proof of the {Fisher} information inequality via a data processing argument},
  author={Zamir, Ram},
  journal={IEEE Trans. Inf. Theory},
  volume={44},
  number={3},
  pages={1246--1250},
  year={1998},
  publisher={IEEE}
}

@book{Szego,
  title={Toeplitz forms and their applications},
  author={U. Grenander and G. Szegő},
  year={1958},
  publisher={University of California Press}
}

@article{ochiai2002performance,
 author={H. Ochiai and H. Imai},
  title={Performance of the deliberate clipping with adaptive symbol selection for strictly band-limited OFDM systems},
  journal=jsac,
  volume={18},
  number={11},
  pages={2270--2277},
  year={2002},
  publisher={IEEE}
}

@article{sabbaghian2013analysis,
  title={Analysis of the impact of nonlinearity on the capacity of communication channels},
  author={Sabbaghian, Maryam and Sulyman, Ahmed Iyanda and Tarokh, Vahid},
  journal={IEEE Trans. Inf. Theory},
  volume={59},
  number={11},
  pages={7671--7683},
  year={2013},
  publisher={IEEE}
}

@article{CB94,
  author = {B. S. Clarke and A. R. Barron},
  title = {Jeffreys’ Prior is Asymptotically Least Favorable Under Entropy Risk},
  journal = {J. Statistical Planning and Inference},
  volume = {41},
  number = {1},
  pages = {37--60},
  year = {1994}
}

@article{YC2024,
author={Yang, Sheng and Combes, Richard},
  journal={IEEE Trans. Inf. Theory}, 
  title={Asymptotic Capacity of 1-bit {MIMO} Fading Channels}, 
  year={2025},
  volume={71},
  number={4},
  pages={2626-2641},
  keywords={Signal to noise ratio;Receiving antennas;Lower bound;Fading channels;Covariance matrices;Channel capacity;Matrix decomposition;Channel estimation;Symbols;Channel state information;Bayesian asymptotics;Jeffreys’ prior;multiple-input multiple-output (MIMO) capacity;1-bit analog-to-digital converter (ADC)},
  doi={10.1109/TIT.2025.3543724}}

@article{Telatar99,
  author = {I. E. Telatar},
  title = {Capacity of Multi-Antenna Gaussian Channels},
  journal = {European Trans. Telecomm.},
  volume = {10},
  number = {6},
  pages = {585--595},
  year = {1999}
}

@article{Foschini,
  author = {G. J. Foschini and M. J. Gans},
  title = {On Limits of Wireless Communications in a Fading Environment When Using Multiple Antennas},
  journal = {Wireless Personal Communications},
  volume = {6},
  pages = {311--335},
  year = {1998}
}

@article{Marzetta10,
  author = {T. L. Marzetta},
  title = {Noncooperative Cellular Wireless with Unlimited Numbers of Base Station Antennas},
  journal = {IEEE Trans. Wireless Commun.},
  volume = {9},
  number = {11},
  year = {2010}
}

@article{LapidothMoserPoisson09,
  author = {A. Lapidoth and S. M. Moser},
  title = {On the Capacity of the Discrete-Time Poisson Channel},
  journal = {IEEE Trans. Inf. Theory},
  volume = {55},
  number = {1},
  pages = {303--322},
  year = {2009},
  month = {Jan.}
}

@article{Shamai90,
  author = {S. {Shamai (Shitz)}},
  title = {Capacity of a Pulse Amplitude Modulated Direct Detection Photon Channel},
  journal = {Proc. {IEE}, Part I (Communications, Speech and Vision)},
  volume = {137},
  number = {6},
  pages = {424--430},
  year = {1990},
  month = {Dec.}
}

@article{kass1996selection,
  title={The selection of prior distributions by formal rules},
  author={Kass, Robert E and Wasserman, Larry},
  journal={Journal of the American statistical Association},
  volume={91},
  number={435},
  pages={1343--1370},
  year={1996},
  publisher={Taylor \& Francis}
}

@inproceedings{Lapidoth02,
  author = {A. Lapidoth},
  title = {On Phase Noise Channels at High {SNR}},
  booktitle = {Proc. {IEEE} Information Theory Workshop},
  pages = {1--4},
  year = {2002}
}

@article{Singh09,
  author = {J. Singh and O. Dabeer and U. Madhow},
  title = {On the Limits of Communication with Low-Precision Analog-to-Digital Conversion at the Receiver},
  journal = {IEEE Trans. Commun.},
  volume = {57},
  number = {12},
  pages = {3629--3639},
  year = {2009}
}

@article{koch2013low,
  author = {T. Koch and A. Lapidoth},
  title = {At Low {SNR}, Asymmetric Quantizers are Better},
  journal = {IEEE Trans. Inf. Theory},
  volume = {59},
  number = {9},
  pages = {5421--5445},
  year = {2013}
}

@article{mezghani2020low,
  author = {A. Mezghani and J. A. Nossek and A. L. Swindlehurst},
  title = {Low {SNR} Asymptotic Rates of Vector Channels with One-Bit Outputs},
  journal = {IEEE Trans. Inf. Theory},
  volume = {66},
  number = {12},
  pages = {7615--7634},
  year = {2020}
}

@article{marzetta1999capacity,
  author = {T. Marzetta and B. Hochwald},
  title = {Capacity of a Mobile Multiple-Antenna Communication Link in {Rayleigh} Flat Fading},
  journal = {IEEE Trans. Inf. Theory},
  volume = {45},
  number = {1},
  pages = {139--157},
  year = {1999}
}

@article{Moser,
  author = {A. Lapidoth and S. Moser},
  title = {Capacity Bounds via Duality with Applications to Multiple-Antenna Systems on Flat-Fading Channels},
  journal = {IEEE Trans. Inf. Theory},
  volume = {49},
  number = {10},
  pages = {2426--2467},
  year = {2003},
  month = {Oct.}
}

@article{ZhengTse2002Grassman,
  author = {L. Zheng and D. N. C. Tse},
  title = {Communication on the {G}rassmann Manifold: {A} Geometric Approach to the Noncoherent Multiple-Antenna Channel},
  journal = {IEEE Trans. Inf. Theory},
  volume = {48},
  number = {2},
  pages = {359--383},
  year = {2002},
  month = {Feb.}
}

@article{YangShamai17,
  author = {S. Yang and S. {Shamai~(Shitz)}},
  title = {On the Multiplexing Gain of Discrete-Time {MIMO} Phase Noise Channels},
  journal = {IEEE Trans. Inf. Theory},
  volume = {63},
  number = {4},
  pages = {2394--2408},
  year = {2017}
}

@article{GhozlanKramer15,
  author = {H. Ghozlan and G. Kramer},
  title = {Models and Information Rates for {Wiener} Phase Noise Channels},
  journal = {IEEE Trans. Inf. Theory},
  volume = {63},
  number = {4},
  pages = {2376--2393},
  year = {2017}
}

@article{Durisi14,
  author = {G. Durisi and A. Tarable and C. Camarda and R. Devassy and G. Montorsi},
  title = {Capacity Bounds for {MIMO} Microwave Backhaul Links Affected by Phase Noise},
  journal = {IEEE Trans. Commun.},
  volume = {62},
  number = {3},
  pages = {920--929},
  year = {2014}
}

@article{love2003grassmannian,
  title={Grassmannian beamforming for multiple-input multiple-output wireless systems},
  author={Love, David J and Heath, Robert W and Strohmer, Thomas},
  journal={IEEE Trans. Inf. Theory},
  volume={49},
  number={10},
  pages={2735--2747},
  year={2003},
  publisher={IEEE}
}

@article{ngo2019cube,
  title={{Cube-split: A structured Grassmannian constellation for non-coherent SIMO communications}},
  author={Ngo, Khac-Hoang and Decurninge, Alexis and Guillaud, Maxime and Yang, Sheng},
  journal={IEEE Trans. Wireless Commun.},
  volume={19},
  number={3},
  pages={1948--1964},
  year={2019},
  publisher={IEEE}
}

\begin{IEEEbiographynophoto}{Sheng Yang}
	(Member, IEEE) received the B.E. degree in electrical engineering from Jiaotong University, Shanghai, China, in 2001, and both the engineer degree and the M.Sc. degree in electrical engineering from Telecom Paris, France, in 2004, respectively. In 2007, he obtained his Ph.D. from Université de Pierre et Marie Curie (Paris VI). From October 2007 to November 2008, he was with Motorola Research Center in Gif-sur-Yvette, France, as a senior staff research engineer. Since December 2008, he has joined CentraleSupélec, Paris-Saclay University, where he is currently a full professor. He has also hold visiting professorships in the University of Hong Kong (2015, 2016) and the Hong Kong University of Science and Technology (2023, 2024). He received the 2015 IEEE ComSoc Young Researcher Award for the Europe, Middle East, and Africa Region (EMEA). He was an associate editor of the IEEE transactions on wireless communications from 2015 to 2020. He is currently an associate editor of the IEEE transactions on information theory.
\end{IEEEbiographynophoto}

\begin{IEEEbiographynophoto}{Richard Combes}
(Member, IEEE) is currently an assistant professor in CentraleSupélec.
He received the Engineering Degree from Telecom Paris, France, in 2008,
the Master's Degree in Mathematics from University of Paris VII in 2009,
and the Ph.D. degree in Mathematics from university of Paris VI in 2013. 
He was a visiting scientist at INRIA (2012) and a post-doc in KTH (2013). 
He received the best paper award at SIGMETRICS 2019 and CNSM 2011. 
His current research interests are machine learning, networks and probability.
\end{IEEEbiographynophoto}

\end{document}